\newcommand{\tuple}[1]{\langle #1 \rangle}
\def\eqdef{\stackrel{\rm def}{=}}
\newcommand{\GTROIS}{\ensuremath{G_3}}
\newcommand{\QGTROIS}{\ensuremath{QG_3}}
\newcommand{\BINDER}{\ensuremath{\overline{Q}}}
\newcommand{\EMPTYBINDER}{\ensuremath{\varepsilon}}
\newcommand{\EMPTY}{\ensuremath{\bm{\varepsilon}}}
\newcommand{\FAUX}{\ensuremath{0}}
\newcommand{\UNDEMI}{\ensuremath{\frac{1}{2}}}
\newcommand{\VRAI}{\ensuremath{1}}
\newcommand{\eQBFpol}[3]{\ensuremath{#2;#3}}
\newcommand{\uQBFpol}[3]{\ensuremath{[\FAUX \to #2 \mid \VRAI \to #3]}}
\newcommand{\eGpol}[3]{\ensuremath{#2;#3}}
\newcommand{\uGpol}[4]{\ensuremath{[\FAUX \to #2 \mid \UNDEMI \to #3 \mid \VRAI \to #4]}}
\newcommand{\prop}{\ensuremath{\mathtt{VAR}}}
\newcommand{\HT}{\ensuremath{\textrm{HT}}}
\newcommand{\TRIVAL}{\ensuremath{\bm{m}}}
\newcommand{\trivaluation}[2]{\ensuremath{\bm{#2}(#1)}}
\newcommand{\trival}[1]{\trivaluation{#1}{m}}
\newcommand{\assign}[2]{\ensuremath{\TRIVAL^#1_#2}}
\newcommand{\assigns}[3]{\ensuremath{\TRIVAL^#1_#2(#3)}}
\newcommand{\inters}[3]{\ensuremath{\TRIVAL^#1_#2(#3)}}
\newcommand{\crisp}[2]{\ensuremath{\mathtt{crisp}_{#1}^{#2}}}
\def\QEM{\ensuremath{\mathit{QEM}}}
\newcommand{\qem}[3]{\ensuremath{\QEM(#1,#2,#3)}}
\def\MCEXISTS{\ensuremath{\mathit{mc}_{\exists}}}
\newcommand{\mcexists}[4]{\ensuremath{\MCEXISTS(#1,#2,#3,#4)}}
\def\MCFORALL{\ensuremath{\mathit{mc}_{\forall}}}
\newcommand{\mcforall}[4]{\ensuremath{\MCFORALL(#1,#2,#3,#4)}}
\newcommand{\egalite}[1]{\ensuremath{\triangleq_{#1}}}
\newtheorem{theorem}{Theorem}
\newtheorem{lemma}[theorem]{Lemma}
\newtheorem{corollary}[theorem]{Corollary}
\newtheorem{proposition}[theorem]{Proposition}
\newtheorem{definition}{Definition}
\newtheorem{example}{Example}
\begin{document}

\title{Extracting Policies from Quantified Answer Set Programs}
\author{
	Mart\'{\i}n Di\'eguez
	\institute{University of Angers, France}
	\email{martin.dieguezlodeiro@univ-angers.fr}
	\and 
	Igor St\'ephan
	\institute{University of Angers, France}
	\email{igor.stephan@univ-angers.fr}
}

\newcommand{\titlerunning}{Extracting Policies from Quantified Answer Set Programs}
\newcommand{\authorrunning}{M. Di\'eguez and I. St\'ephan}
\maketitle
\begin{abstract}
Quantified Answer Set Programming (QASP) extends Answer Set Programming (ASP) by allowing quantification over propositional variables, similar to Quantified Boolean Formulas (QBF). In this paper, we interpret models of QASP formulas in terms of \emph{policies}, which represent decision-making strategies that determine how existentially quantified variables should be assigned, given the conditions set by universally quantified variables. As a main contribution, we present an algorithm for policy extraction under QASP semantics, inspired by the Equilibrium Logic semantics for general ASP theories.
\end{abstract}

\section{Introduction}\label{sec:intro}
Having its roots in logic programming, \emph{Answer Set Programming}~\cite{breitr11a} (ASP) is a formalism for nonmonotonic reasoning (NMR) and Knowledge Representation (KR) that solves complex problems involving combinatorial search, optimisation, and reasoning under uncertainty. In ASP, problems are encoded as a set of logical rules, where the corresponding models (answer sets) represent solutions. Besides the traditional \emph{stable models} semantics, there exist alternative interpretations of logic programs under answer set semantics~\cite{Lifschitz2010}, including Pearce's \emph{Equilibrium Logic}~\cite{pearce06a} (EL), which uses a monotonic basis and a minimality condition to induce nonmonotonicity. EL has been extended to enhance expressiveness for problems like temporal or epistemic reasoning.
For propositional theories, the satisfiability problem in ASP lies within the second level of the polynomial hierarchy ($\Sigma_2^p$)~\cite{petowo01a}. Approaches like stable-unstable semantics~\cite{bojata16a} extend ASP to solve problems beyond this level by integrating logic programs as oracles.

A second approach to problem solving is the ASP(Q) system\footnote{Available at \url{https://www.mat.unical.it/ricca/downloads/qasp-0.1.2.jar}}~\cite{AMENDOLA_RICCA_TRUSZCZYNSKI_2019}, which extends the ASP syntax with existential and universal quantifiers that range over the stable models of a program rather than over atoms. In terms of applications, ASP(Q) has been employed in representing and solving graph problems~\cite{AmendolaCRT22}, as well as in epistemic logic programming~\cite{0001M23} and argumentation~\cite{DBLP:conf/iclp/000124}.

\noindent In the classical setting, \emph{Quantified Boolean Formulas}~\cite[Chapter 31]{handbookSAT} (QBFs) extend propositional logic with second-order quantification, allowing for the expression of problems beyond the complexity class $\Sigma_2^p$, since QBF satisfiability is PSPACE-complete~\cite{Stockmeyer_Meyer_STOC_73}. 
\emph{Quantified Answer Set Programming} (QASP)~\cite{ua8046,falaroscso21a} similarly extends ASP by incorporating propositional quantifiers, analogous to QBF. As in QBF, both existential and universal quantifiers apply to the truth values of propositional atoms. However, the key difference lies in the semantics: in QASP, the truth values assigned to atoms must be consistent with the stable models of the program.
There exist two primary semantics for QASP programs. In the semantics proposed by Fandinno et al.~\cite{falaroscso21a}, an atom $p$ is forced to be true in a stable model by adding the formula $\neg\neg p$ to the context. In contrast, in the semantics defined by Stephan~\cite{ua8046}, $p$ is made true simply by adding $p$ to the context. Both semantics behave identically when determining whether an atom is false in a stable model.
In terms of implementation, the semantics from~\cite{falaroscso21a} are realized in a tool called \texttt{qasp2qbf}\footnote{Available at~\url{https://github.com/potassco/qasp2qbf}.}, which has been used to solve problems in conformant planning. Furthermore, it is shown in~\cite[Appendix B]{falaroscso21a} that ASP(Q) and QASP are intertranslatable.

In QASP, solutions to quantified logic programs can be interpreted as two-player games with complete information. The first player attempts to construct stable models of the ASP program, and the program is considered a positive instance of the QASP problem if and only if a winning strategy exists for the first player. The tool \texttt{qasp2qbf} reduces a QASP program to a QBF, which is then passed to a QBF solver to determine satisfiability. If the QBF is satisfiable, the solver provides an assignment for the initial block of existential quantifiers. This means that for QASP programs beginning with a universal quantifier, the solver can only report satisfiability status, without producing a concrete strategy. At present, the tool does not support the construction of such two-player games. In contrast, the approach in~\cite{ua8046} replaces existential quantifiers with Skolem functions~\cite{Benedetti_CADE_05}, effectively simulating the existential player's choices. This method is capable of computing all possible two-player games that satisfy the QASP program.

In this paper, we examine~\cite{falaroscso21a} and~\cite{ua8046} semantics from the perspective of Equilibrium Logic (EL). We introduce the following restrictions: we do not consider a specific logic programming language; instead, we translate any logic programming rule $H \leftarrow B$ into the implication $B \to H$ in propositional logic. Default negation (\texttt{not}) is replaced by the negation symbol ($\neg$) in propositional logic, and propositional formulas can be arbitrarily nested. Additionally, we treat any arbitrary propositional formula\footnote{Since EL allows interpreting arbitrary (propositional) formulas.} as the matrix of a Quantified Boolean Formula (QBF).

The first contribution of this paper is to demonstrate that the two semantics are not equivalent. The second contribution is the extension of the notion of policy from the classical case to the quantified propositional G\"odel logic \QGTROIS{}~\cite{BaazCZ00}, which provides a monotonic basis for selecting minimal policies. The third contribution is the development of an algorithm that takes a QBF as input and computes the set of policies satisfying it under the semantics of~\cite{falaroscso21a}. This algorithm combines QBF and \QGTROIS{}-policies during execution.

This paper is organised as follows: Section~\ref{sec:qbf} introduces QBFs and the concept of policy. Section~\ref{sec:eqlogic} formulates EL in terms of the G\"odel logic \GTROIS{} plus a minimality condition, proving several interesting properties. Section~\ref{sec:qasp} presents the two QASP semantics used in this paper. Section~\ref{sec:policies} introduces the semantics for propositional quantifiers in \GTROIS{} and the concept of \QGTROIS-policy. In Section~\ref{sec:algorithmic}, we present an algorithm, based on EL, for extracting policies from QBFs under QASP semantics. The paper concludes with Section~\ref{sec:conclusions}, discussing future work.
 
\section{Background}
\subsection{QBFs and QBFs Policies}\label{sec:qbf}
Let $\prop$ be a (nonempty) set of atoms. A QBF (in \emph{prenex normal form}) is generated by the grammar:
\begin{eqnarray}
    \varphi,\psi &::=& p \mid \bot \mid (\varphi \wedge \psi) \mid (\varphi \vee \psi) \mid (\varphi \to \psi) \label{grammar:propositional}\\
    F &::=&  \varphi \mid  \forall x\; F \mid \exists x\; F, \label{grammar:quantifiers}
\end{eqnarray}
\noindent where $p, x \in \prop$. We will usually use the common rules for the elimination of parentheses, when necessary. The first part of the grammar~\eqref{grammar:propositional} generates propositional formulas while~\eqref{grammar:quantifiers} generates the prefix of quantifiers.
Given a QBF $Q_1x_1 \cdots Q_nx_n \varphi$ where $Q_i\in \lbrace \forall,\exists\rbrace$ and $x_i\in \prop$, for all $1\le i \le n$ as $\overline{Q}\varphi$, we will name the sequence $Q_1x_1 \cdots Q_nx_n$ the \emph{binder} and we will name the quantifier-free formula $\varphi$ the \emph{matrix}. 
Symbol \EMPTYBINDER\ denotes the empty binder.
For the semantics we define a \emph{classical interpretation} as a mapping $\bm{m}: \prop \mapsto  \lbrace \FAUX, \VRAI\rbrace$. For a given propositional variable $p$, $\trival{p}=\FAUX$ means that $p$ is false; $\trival{p}=\VRAI$ means that $p$ is true.
An interpretation $\bm{m}$ is said to be \emph{total} if it is defined for all $p\in \prop$ and \emph{partial} otherwise. If not specified explicitly, we will assume that $\bm{m}$ is total. 
Given an interpretation $\bm{m}$, $x \in \prop$ and $i\in \lbrace \FAUX,\VRAI\rbrace$, we define the \emph{update} of $x$ in $\bm{m}$ as 
\begin{equation}
\assigns{x}{i}{p} \eqdef \left\{ 
				\begin{array}{ll} 
				i & \hbox{if } p = x\\
				\trival{p} & \hbox{otherwise.}
				\end{array}\right. \label{eq:update}
\end{equation}    
Moreover, by \EMPTY\ we denote the (partial) interpretation such that for all $x\in \prop$, $\EMPTY(x)$ is not defined. Any interpretation $\bm{m}$ can be extended to any formula $\varphi$ by means of the following satisfaction relation:
\begin{eqnarray*}
 \trival{\bot}= \FAUX & \trival{\varphi \wedge \psi} =\min\lbrace\trival{\varphi},\trival{\psi}\rbrace & \trival{\varphi \vee \psi} =\max\lbrace\trival{\varphi},\trival{\psi}\rbrace \\
 &\trival{\varphi \rightarrow \psi} = \begin{cases}
                                            \VRAI & \hbox{if } \trival{\varphi} \le \trival{\psi} \\   
                                           0 & \textrm{ otherwise. }   
                                        \end{cases}&
\end{eqnarray*}
\noindent For a theory $\Gamma$, we define $\trival{\Gamma} = \min\lbrace\trival{\varphi}\mid \varphi \in \Gamma \rbrace$. 
We say that an interpretation $\bm{m}$ is a \emph{model} of a formula $\varphi$ if $\trival{\varphi} = \VRAI$.

When a QBF is given in prenex normal form, semantics can be given in terms of QBF-policies. A \emph{QBF-policy} (or policy in \cite{cosfarlanlebmar06}) refers to a strategy or decision rule that dictates how to assign truth values to the variables in a QBF to achieve a desired outcome. Since QBF extends classical logic with second order quantification over the propositional variables, a policy can represent a systematic approach for deciding the truth values of variables depending on the quantifier structure.

In the context of Boolean logic, evaluating a propositional variable $p$ to $\VRAI$ (resp. $\FAUX$) on a formula $\varphi$ is equivalent to replacing all occurrences of $p$ by $\top$ (resp. by $\bot$). Unfortunately, the same approach cannot be represented within the context of many-valued logics~\cite{many-valued}, where there are truth values that are not captured with elements of the language. Our definition of QBF-policies can be adapted easily to the case of many-valued logics, as done in Section~\ref{sec:policies}.

\begin{definition}[QBF-policy \cite{cosfarlanlebmar06}]
The set $TP_{QBF}(\overline{Q})$ of QBF-policies for a binder $\overline{Q}$ is defined, recursively, as follows:
\begin{eqnarray*}
TP_{QBF}(\EMPTYBINDER)&=&\{\lambda\} \\
TP_{QBF}(\exists x_i\ldots Q_n x_n) &=&  \lbrace \eQBFpol{x_i}{v}{\pi_{i+1}} \mid    v \in \lbrace \FAUX, \VRAI \rbrace, \pi_{i+1} \in TP_{QBF}(Q_{i+1} x_{i+1}\ldots Q_n x_n)\rbrace\\
TP_{QBF}(\forall x_i\ldots Q_n x_n) &=& \lbrace \uQBFpol{x_i}{\pi^0_{i+1}}{\pi^1_{i+1}}  \mid  \\
    && \hspace{68pt}\pi^0_{i+1}, \pi^1_{i+1} \in TP_{QBF}(Q_{i+1} x_{i+1}\ldots Q_n x_n)\rbrace
\end{eqnarray*}

\noindent where $\lambda$ represents the \emph{empty QBF-policy}.
The operator ``;'' represents the sequential composition of QBF-policies and $\lbrace \FAUX , \VRAI \rbrace \to \pi_{i+1}$ represents all possible functions from assignments of  $x_i$ to values of $\lbrace \FAUX,\VRAI\rbrace$ to QBF-policies in $TP_{QBF}(Q_{i+1} x_{i+1}\ldots Q_n x_n)$. 
\end{definition}

The satisfaction of a QBF is based on the concept of configuration. A \emph{configuration} is the structure $\tuple{\bm{m}, \pi}$, where $\bm{m}$ is a partial classical interpretation used to store the different assignments obtained during the analysis of the QBF-policy $\pi$. Given a QBF  $Q_1 x_1, \cdots, Q_n x_n \varphi$ and a configuration $\tuple{\bm{m}, \pi}$, the satisfaction relation is defined, by cases, as 

\begin{itemize}
    \item $\tuple{\bm{m},\lambda}\models \varphi$ if $\trival{\varphi}=\VRAI$ 
    \item $\tuple{\bm{m},\pi}\models \exists x Q_{i+1} x_{i+1}\cdots Q_n\; x_n \varphi$ if for some $v \in \lbrace \FAUX, \VRAI\rbrace$, $\pi = \eQBFpol{x}{v}{\pi'}$ and $\tuple{\assign{x}{v},\pi'} \models Q_{i+1} x_{i+1}\cdots Q_n\; x_n \varphi$.   
    \item $\tuple{\bm{m},\pi}\models \forall x Q_{i+1} x_{i+1}\cdots Q_n\; x_n \varphi$ if $\pi = \uQBFpol{x}{\pi^0}{\pi^1}$ and both $\tuple{\assign{x}{\FAUX}, \pi^0} \models Q_{i+1} x_{i+1}\cdots Q_n\; x_n \varphi$ and $\tuple{\assign{x}{\VRAI}, \pi^1} \models Q_{i+1} x_{i+1}\cdots Q_n\; x_n \varphi$.   
\end{itemize}

\begin{definition} 
A QBF-policy $\pi$ is said to satisfy a QBF  $Q_1 x_1, \cdots, Q_n x_n \varphi$ if $\tuple{\EMPTY,\pi} \models Q_1 x_1, \cdots, Q_n x_n \varphi$.
\end{definition}

\begin{example}\label{example:qbf} Let us consider the QBF $\varphi = \forall x\;  \exists z \left((z \to x) \wedge \left( z \vee \neg z\right)\right)$. 
There exist two QBF-policies that satisfy $\varphi$:
$\pi_{(a)}=\uQBFpol{x}{\eGpol{z}{\FAUX}{\lambda}}{\eQBFpol{z}{\FAUX}{\lambda}}$ and
$\pi_{(b)}=\uQBFpol{x}{\eQBFpol{z}{\FAUX}{\lambda}}{\eQBFpol{z}{\VRAI}{\lambda}}$
with a tree-shape representation shown in Figure~\ref{fig:policies:qbf}:

\begin{figure}[h!]\centering
\begin{subfigure}{.35\textwidth}\centering
\begin{tikzpicture}[level distance=1cm]
\node{$x$}
child{ node{$z$} 
    child{ node{$\lambda$}
        edge from parent node[left] {$\FAUX$}
    }
    edge from parent node[left] {$\FAUX$}
}
child{ node{$z$} 
       child{ node{$\lambda$}
            edge from parent node[right] {$\FAUX$}          
       }
       edge from parent node[right] {$1$}
}

;
\end{tikzpicture}
\caption{}
\label{fig:policies:qbfa}
\end{subfigure}
\begin{subfigure}{.35\textwidth}\centering
\begin{tikzpicture}[level distance=1cm]
\node{$x$}
child{ node{$z$} 
    child{ node{$\lambda$}
        edge from parent node[left] {$\FAUX$}
    }
    edge from parent node[left] {$\FAUX$}
}
child{ node{$z$} 
       child{ node{$\lambda$}
            edge from parent node[right] {$\VRAI$}          
       }
       edge from parent node[right] {$\VRAI$}
};
\end{tikzpicture}
\caption{}
\label{fig:policies:qbfb}
\end{subfigure}
 \caption{Two policies for the QBF $\psi = \forall x \exists z \left( \left(z \to x\right) \wedge \left( z \vee \neg z\right)\right)$.}
 \label{fig:policies:qbf}
\end{figure}
\end{example}
Note that, since $z \vee \neg z$ is a tautology, $\varphi$ is equivalent to $\varphi = \forall x\;  \exists z (z \to x)$. Therefore, it is easy to see that, if $x$ is evaluated to $\FAUX$, we are forced to evaluate $z$ to $\FAUX$. On the contrary, when $x$ is evaluated to $\VRAI$, the quantified formula is satisfied regardless the truth value of $z$. 

\subsection{Propositional Equilibrium Logic}
\label{sec:eqlogic}
EL shares its syntax with propositional logic. Therefore, formulas are generated by means of the grammar presented in~\eqref{grammar:propositional}.
The semantics of EL are defined in terms of the (monotonic) logic of \emph{Here-and-There}~\cite{heyting30a,goedel32a} (\HT{}) together with a minimality criterion that induces the non-monotonicity. However, for our purposes we present an equivalent definition in terms of the three-valued G\"odel logic \GTROIS~\cite{goedel32a}. Within this logic, an interpretation is defined as a three-valued mapping $\bm{m}: \prop \mapsto  \lbrace \FAUX, \UNDEMI, \VRAI\rbrace$. As in classical logic, given $p \in \prop$, $\trival{p}=\FAUX$ means that $p$ is false; $\trival{p}=\VRAI$ means that $p$ is true. 
In addition, $\trival{p}=\UNDEMI$ means that $p$ is true  by default.
Given an  interpretation $\bm{m}$, $x \in \prop$ and $i\in \lbrace \FAUX,\UNDEMI,\VRAI\rbrace$, the concept of \emph{update} is defined now as in~\eqref{eq:update} with the peculiarity that now $i \in \lbrace \FAUX, \UNDEMI,\VRAI\rbrace$. 
The semantics correspond to the propositional case except for the implication $\varphi\to \psi$, which is interpreted as 
\begin{itemize} 
\item $\trival{\varphi \rightarrow \psi} = \begin{cases}
                                            \VRAI & \hbox{if } \trival{\varphi} \le \trival{\psi} \\   
                                            \trival{\psi} & \textrm{ otherwise. }   
                                            \end{cases}$ 
\end{itemize}
Note that, in \GTROIS{} (so in classical logic as well), $\neg \varphi \eqdef \varphi \to \bot$. As in the classical case, we say that a \GTROIS{}  interpretation $\bm{m}$ is a \emph{model} of a formula $\varphi$ if $\trival{\varphi} = \VRAI$.

\begin{proposition}[\cite{pearce06a}] Here-and-There logic and the G\"odel \GTROIS\ are equivalent. 
\end{proposition}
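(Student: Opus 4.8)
The plan is to establish a truth-value-preserving bijection between \HT{} interpretations and \GTROIS{} interpretations, and then to show by structural induction that satisfaction is preserved across this bijection; from this the equality of the two notions of validity (and, more generally, of the two consequence relations) follows immediately.

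First I would recall that an \HT{} interpretation is a pair $\tuple{H,T}$ of sets of atoms with $H \subseteq T \subseteq \prop$, where $H$ is the ``here'' world and $T$ the ``there'' world, that $T$ evaluates formulas classically (from $T$ only $T$ itself is accessible), and that the \HT{} clause for implication reads: $\tuple{H,T}, w \models \varphi \to \psi$ iff for every world $w' \supseteq w$ in $\{H,T\}$, $\tuple{H,T}, w' \models \varphi$ implies $\tuple{H,T}, w' \models \psi$. To such a pair I associate the \GTROIS{} interpretation $\bm{m}$ defined by $\trival{p} = \VRAI$ if $p \in H$, $\trival{p} = \UNDEMI$ if $p \in T \setminus H$, and $\trival{p} = \FAUX$ if $p \notin T$. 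Conversely, a \GTROIS{} interpretation $\bm{m}$ yields $H \eqdef \{p \mid \trival{p} = \VRAI\}$ and $T \eqdef \{p \mid \trival{p} \ge \UNDEMI\}$, with $H \subseteq T$ by construction; these two maps are mutually inverse. The persistence property of \HT{} (``here'' satisfaction implies ``there'' satisfaction) is mirrored by the linear order $\FAUX < \UNDEMI < \VRAI$, which is precisely why the ``here'' set sits inside the ``there'' set.

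The core is the claim, proved by induction on the structure of $\varphi$, that $\tuple{H,T}, H \models \varphi$ iff $\trival{\varphi} = \VRAI$, and $\tuple{H,T}, T \models \varphi$ iff $\trival{\varphi} \ge \UNDEMI$. The atomic case and $\varphi = \bot$ are immediate from the definition of $\bm{m}$; conjunction and disjunction follow since $\min$ and $\max$ realise the Kripke clauses for $\wedge$ and $\vee$ at both worlds. The only delicate case is implication, which also subsumes negation since $\neg\varphi \eqdef \varphi \to \bot$. Here one applies the induction hypothesis to $\varphi$ and $\psi$ and runs a short case analysis on the pair $(\trival{\varphi},\trival{\psi})$: if $\trival{\varphi} \le \trival{\psi}$ then $\trival{\varphi \to \psi} = \VRAI$, and both the ``here'' and the ``there'' conditional clauses hold; if $\trival{\varphi} > \trival{\psi}$, then splitting on whether $\trival{\psi}$ equals $\FAUX$ or $\UNDEMI$, one checks that $\trival{\varphi \to \psi} = \trival{\psi}$ equals $\UNDEMI$ exactly when $\varphi \to \psi$ fails at $H$ but holds at $T$, and equals $\FAUX$ exactly when it fails at both worlds. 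I expect the main (though elementary) obstacle to be exactly this step: matching the three-valued truth table of the \GTROIS{} implication against the conjunction of the two Kripke clauses, and in particular verifying that the ``there'' clause collapses to ordinary classical implication on $T$.

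Finally, instantiating the claim at an arbitrary $\varphi$ and using that the two correspondences are inverse bijections, I would conclude that $\varphi$ is \HT{}-valid (true at the ``here'' world of every \HT{} model) iff $\trival{\varphi} = \VRAI$ for every \GTROIS{} interpretation $\bm{m}$, i.e.\ iff $\varphi$ is \GTROIS{}-valid; relativising the same argument to a theory $\Gamma$ shows that the two consequence relations coincide as well, which is the asserted equivalence.
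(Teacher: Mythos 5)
Your argument is correct and is the standard proof of this classical correspondence: the bijection $\tuple{H,T}\mapsto\bm{m}$ with $\trival{p}=\VRAI$ iff $p\in H$, $\trival{p}=\UNDEMI$ iff $p\in T\setminus H$, together with the induction showing that truth at the ``here'' world matches value $\VRAI$ and truth at the ``there'' world matches value $\ge\UNDEMI$, with the only nontrivial case being implication. The paper does not prove this proposition at all — it is quoted from Pearce — so there is nothing to compare against; your reconstruction matches the proof found in the cited literature.
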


Given an  interpretation $\bm{m}$ and $\Sigma\subseteq \prop$, we define the  interpretation $\crisp{\Sigma}{\bm{m}}$ \footnote{The term crisp has been coined in the area of fuzzy logics to denote ``classical interpretations''} as $\crisp{\Sigma}{\bm{m}}(p)=1$ if $\trival{p}=\UNDEMI$ and $p \in \Sigma$; $\crisp{\Sigma}{\bm{m}}(p)=\trival{p}$, otherwise. 
For simplicity, we will denote $\crisp{~}{\bm{m}}\eqdef \crisp{\prop}{\bm{m}}$.

\begin{proposition}\label{prop:classical} For all  interpretations $\bm{m}$ and  formulas $\varphi$, $\crisp{~}{\bm{m}}(\varphi)\in \lbrace 0,1\rbrace$.
\end{proposition}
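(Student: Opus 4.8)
The plan is to proceed by structural induction on the formula $\varphi$, following the grammar~\eqref{grammar:propositional}. Write $\bm{m}' = \crisp{~}{\bm{m}}$ for brevity; by definition $\bm{m}'(p) = \VRAI$ when $\trival{p} = \UNDEMI$ and $\bm{m}'(p) = \trival{p}$ otherwise, so in particular $\bm{m}'(p) \in \lbrace \FAUX, \VRAI \rbrace$ for every atom $p$. This settles the base cases: for $p \in \prop$ the claim is immediate from the definition, and for $\bot$ we have $\bm{m}'(\bot) = \FAUX \in \lbrace \FAUX, \VRAI\rbrace$.

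For the inductive step, assume $\bm{m}'(\varphi), \bm{m}'(\psi) \in \lbrace \FAUX, \VRAI\rbrace$. For conjunction, $\bm{m}'(\varphi \wedge \psi) = \min\lbrace \bm{m}'(\varphi), \bm{m}'(\psi)\rbrace$, which is one of the two input values and hence in $\lbrace \FAUX, \VRAI\rbrace$; disjunction is symmetric with $\max$. For implication, I would split on the definition of $\to$ in \GTROIS{}: if $\bm{m}'(\varphi) \le \bm{m}'(\psi)$ then $\bm{m}'(\varphi \to \psi) = \VRAI$; otherwise $\bm{m}'(\varphi \to \psi) = \bm{m}'(\psi) \in \lbrace \FAUX, \VRAI\rbrace$ by the induction hypothesis. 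In all cases the value lands in $\lbrace \FAUX, \VRAI\rbrace$, completing the induction. (Since $\neg\varphi$ abbreviates $\varphi \to \bot$, no separate case is needed.)

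I do not anticipate a real obstacle here: the statement is essentially the observation that $\lbrace \FAUX, \VRAI\rbrace$ is closed under $\min$, $\max$, and the G\"odel implication whenever both arguments are already classical, together with the fact that $\crisp{~}{\bm{m}}$ maps every atom into $\lbrace \FAUX, \VRAI\rbrace$. The only point requiring a moment's care is the implication case, where one must note that the ``otherwise'' branch returns $\bm{m}'(\psi)$ rather than an intermediate value — but this is exactly what the induction hypothesis controls. The proof is therefore short and purely mechanical.
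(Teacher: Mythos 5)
Your proof is correct: the paper states Proposition~\ref{prop:classical} without giving any proof, and your structural induction is exactly the standard argument one would supply --- the base case follows because $\crisp{~}{\bm{m}}$ sends every atom (and $\bot$) into $\lbrace \FAUX, \VRAI\rbrace$, and the inductive step follows because $\lbrace \FAUX, \VRAI\rbrace$ is closed under $\min$, $\max$, and the \GTROIS{} implication, whose ``otherwise'' branch you correctly handle via the induction hypothesis. Nothing is missing.
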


\begin{proposition}[persistency]\label{proof:presistency} For all formulas $\varphi$ and for all  interpretations $\bm{m}$,
\begin{enumerate*}[label=(\arabic*)]
\item $\trival{\varphi} \not= \FAUX$  implies $\crisp{~}{\bm{m}}(\varphi)= \VRAI$ and 
\item $\trival{\varphi} = \FAUX$  implies $\crisp{~}{\bm{m}}(\varphi)= \FAUX$.
\end{enumerate*}
\end{proposition}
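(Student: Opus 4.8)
The plan is to prove both items simultaneously by structural induction on $\varphi$. The cleanest route is to strengthen the claim to the single biconditional $\crisp{~}{\bm{m}}(\varphi) = \FAUX \iff \trival{\varphi} = \FAUX$, for every interpretation $\bm{m}$. Item (2) is then the right-to-left direction verbatim, and item (1) follows from the left-to-right direction: its contrapositive says $\trival{\varphi} \neq \FAUX$ implies $\crisp{~}{\bm{m}}(\varphi) \neq \FAUX$, and Proposition~\ref{prop:classical} guarantees $\crisp{~}{\bm{m}}(\varphi) \in \lbrace\FAUX,\VRAI\rbrace$, so ``not false'' forces ``true''. Since Proposition~\ref{prop:classical} also applies to each subformula $\psi$ of $\varphi$, I can read the induction hypothesis throughout in the sharper form $\crisp{~}{\bm{m}}(\psi) = \VRAI \iff \trival{\psi} \neq \FAUX$, which is what the induction will actually need.

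First I would dispatch the base cases. For $\varphi = \bot$ both sides are trivially true. For $\varphi = p$ an atom, the claim is immediate from the definition of $\crisp{~}{\bm{m}}$: the value $\FAUX$ is preserved, the value $\UNDEMI$ is raised to $\VRAI$, and $\VRAI$ is kept, so $\crisp{~}{\bm{m}}(p) = \FAUX$ exactly when $\trival{p} = \FAUX$.

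Then the inductive step. For conjunction and disjunction the argument is uniform and routine: both $\crisp{~}{\bm{m}}(\cdot)$ and $\trival{\cdot}$ commute with $\min$ over the subformulas (resp.\ with $\max$), a minimum equals $\FAUX$ iff some argument does, a maximum equals $\FAUX$ iff every argument does, so the conclusion drops out of the induction hypothesis applied to $\psi_1$ and $\psi_2$. The single case needing care — and the main (though mild) obstacle — is implication $\varphi = \psi_1 \to \psi_2$. Here I would use the elementary computation in \GTROIS{} that $\trival{\psi_1 \to \psi_2} = \FAUX$ holds iff $\trival{\psi_1} \neq \FAUX$ and $\trival{\psi_2} = \FAUX$, and the analogous fact at the crisp level, namely $\crisp{~}{\bm{m}}(\psi_1 \to \psi_2) = \FAUX$ iff $\crisp{~}{\bm{m}}(\psi_1) = \VRAI$ and $\crisp{~}{\bm{m}}(\psi_2) = \FAUX$ (here using that by Proposition~\ref{prop:classical} the crisp values of $\psi_1$ and $\psi_2$ are already classical). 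Feeding the sharper induction hypothesis for $\psi_1$ and $\psi_2$ into the crisp condition rewrites it exactly as the \GTROIS{} condition, closing the induction.

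The subtlety worth flagging is precisely why the strengthened biconditional is what makes this work: in the implication case one must match ``$\crisp{~}{\bm{m}}(\psi_1) = \VRAI$'' with ``$\trival{\psi_1} \neq \FAUX$'', i.e.\ with the contrapositive of the left-to-right direction, so attempting to prove (1) and (2) in isolation would leave the induction without the hypothesis it needs; bundling them into one statement (together with Proposition~\ref{prop:classical} for the two-valuedness of $\crisp{~}{\bm{m}}$) is the whole trick.
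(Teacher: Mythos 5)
Your proof is correct: the strengthened biconditional $\crisp{~}{\bm{m}}(\varphi)=\FAUX \iff \trival{\varphi}=\FAUX$ is exactly the right induction invariant, the implication case is handled properly (in \GTROIS{} the value of $\psi_1\to\psi_2$ is \FAUX{} precisely when $\trival{\psi_1}\not=\FAUX$ and $\trival{\psi_2}=\FAUX$, which your induction hypothesis matches to the crisp condition via Proposition~\ref{prop:classical}), and you are right that proving the two items separately would starve the induction. The paper states this proposition without proof -- it is the standard persistence property of Here-and-There -- so there is nothing to compare against, but your argument is the canonical one.
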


\begin{corollary}\label{cor:negation} For all formulas $\varphi$ and interpretations $\bm{m}$, $\trival{\neg \varphi}=1$ iff $\crisp{~}{\bm{m}}(\varphi) =0$.
\end{corollary}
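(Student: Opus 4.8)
The plan is to reduce the biconditional entirely to the two facts already established just above: Proposition~\ref{prop:classical}, which tells us $\crisp{~}{\bm{m}}(\varphi)$ is Boolean, and Proposition~\ref{proof:presistency}, which pins down exactly when that Boolean value is $0$ versus $1$ in terms of $\trival{\varphi}$. The only genuinely new observation needed is a direct unfolding of the $G_3$ semantics of negation.

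First I would compute $\trival{\neg\varphi}$ from the definition $\neg\varphi \eqdef \varphi \to \bot$ together with the $G_3$ clause for implication: since $\trival{\bot}=\FAUX$, we get $\trival{\varphi\to\bot}=\VRAI$ when $\trival{\varphi}\le\FAUX$, i.e.\ when $\trival{\varphi}=\FAUX$, and $\trival{\varphi\to\bot}=\trival{\bot}=\FAUX$ otherwise. Hence the intermediate equivalence
\[
\trival{\neg\varphi}=\VRAI \quad\Longleftrightarrow\quad \trival{\varphi}=\FAUX .
\]

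Next I would chain this with persistency. For the left-to-right direction, assume $\trival{\neg\varphi}=\VRAI$; then $\trival{\varphi}=\FAUX$, and clause~(2) of Proposition~\ref{proof:presistency} gives $\crisp{~}{\bm{m}}(\varphi)=\FAUX$. For the converse, assume $\crisp{~}{\bm{m}}(\varphi)=\FAUX$; if we had $\trival{\varphi}\neq\FAUX$, clause~(1) of Proposition~\ref{proof:presistency} would force $\crisp{~}{\bm{m}}(\varphi)=\VRAI$, contradicting Proposition~\ref{prop:classical} together with the assumption, so $\trival{\varphi}=\FAUX$ and therefore $\trival{\neg\varphi}=\VRAI$.

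There is essentially no obstacle here: the statement is an immediate corollary, and the only place to be slightly careful is the case split in the implication semantics of $G_3$ (making sure to note $\trival{\bot}=\FAUX$ so the ``otherwise'' branch yields $\FAUX$, not $\UNDEMI$). Everything else is a mechanical combination of the two preceding propositions.
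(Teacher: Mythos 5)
Your proposal is correct and follows exactly the route the paper intends: the corollary is stated without proof as an immediate consequence of Proposition~\ref{proof:presistency} together with the unfolding of $\neg\varphi$ as $\varphi\to\bot$ under the \GTROIS{} implication clause, which is precisely what you do. The only cosmetic remark is that in the converse direction the appeal to Proposition~\ref{prop:classical} is superfluous --- the contradiction is simply that $\crisp{~}{\bm{m}}(\varphi)$ cannot equal both $\FAUX$ and $\VRAI$.
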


Proposition~\ref{prop:classical} states that crisp  interpretations can be regarded as classical interpretations. We can relate two  interpretations $\bm{m}$ and $\bm{m'}$, with respect to $\Sigma\subseteq\prop$, by saying that
\begin{itemize}[itemsep=0pt]
\item $\bm{m} \triangleq_\Sigma \bm{m'}$ if $\trival{p} = \bm{m'}(p)$, for all $p \in \Sigma$. 
\item $\bm{m} \trianglelefteq_\Sigma \bm{m'}$ if $\crisp{\Sigma}{\bm{m}} \triangleq_\Sigma \crisp{\Sigma}{\bm{m'}}$ and $\trival{p} \le \bm{m'}(p)$, for all $p \in \Sigma$.
\item $\bm{m} \triangleleft_{\Sigma} \bm{m'}$ if  $\bm{m} \trianglelefteq_{\Sigma} \bm{m'}$ and not $\bm{m} \triangleq_\Sigma \bm{m'}$.
\end{itemize}

When $\Sigma=\prop$, the suffix $\Sigma$ will be omitted from the context in the relations above. 

\begin{definition}[Equilibrium Model] 
\label{def:EM}
We say that an  interpretation $\bm{m}$ is an equilibrium model of a propositional formula $\varphi$ if the following conditions hold: 
\begin{enumerate}[label=(\arabic*)]
    \item $\TRIVAL = \crisp{\prop}{\TRIVAL}$, i.e. $\bm{m}$ is a classical model.
    \item \TRIVAL\ is $\trianglelefteq_{\prop}$-minimal, i.e. there is no  $\TRIVAL' \triangleleft_{\prop} \TRIVAL$ satisfying $\TRIVAL'(\varphi)=\VRAI$.
\end{enumerate} 
\end{definition}

\begin{example}[Example \ref{example:qbf} continued]\label{example:qel}
Let us consider the formula $\varphi = ((z  \rightarrow x) \wedge ( z \vee \neg z))$. This program corresponds to the logic program \verb|{ x :- z. {z}. }|, which has two answer sets: $\emptyset$ and $\lbrace x, z\rbrace$. The three-valued truth table, containing all possible \GTROIS-interpretations for $\varphi$ is displayed below.  

\[\begin{array}{|c|ccccccccc|}
\hline
i & \textbf{1} & \textbf{2} & \textbf{3} & \textbf{4} & \textbf{5} & \textbf{6}& \textbf{7} & \textbf{8} & \textbf{9} \\\hline
x & \FAUX & \FAUX & \FAUX &\UNDEMI & \UNDEMI & \UNDEMI &  1 & 1 & 1\\
z & \FAUX & \UNDEMI & 1 & \FAUX & \UNDEMI & 1      &  \FAUX & \UNDEMI & 1\\
\hline
\bm{m_i}(\varphi) & 1 & \FAUX & \FAUX & 1 & \UNDEMI & \UNDEMI & 1 & \UNDEMI & 1 \\
\hline
\end{array}\]
\noindent It can be checked that $\bm{m_1}$, $\bm{m_4}$, $\bm{m_7}$ and $\bm{m_9}$ are the models of $\varphi$.
Moreover, $\bm{m_1}$, $\bm{m_7}$ and $\bm{m_9}$ are crisp so they are the candidates to become equilibrium models. 
Both $\bm{m_1}$ and $\bm{m_9}$ are $\trianglelefteq$-minimal so they are equilibrium models that correspond to the answer sets $\emptyset$ and $\lbrace x, z\rbrace$, respectively. 
On the contrary, it can be checked that $\bm{m_4} \triangleleft \bm{m_7}$, so $\bm{m_7}$ is not an equilibrium model. The following propositions show that formulas of the form $\neg x$ (resp. $\neg \neg x$), with $x \in \prop$, can be used to select equilibrium models where $x$ is false (resp. true). 
\end{example}

\begin{proposition}\label{prop:x:positive} Let $\Gamma$ be a propositional theory and let $x$ be a propositional variable. For all \GTROIS-interpretations $\bm{m}$, $\bm{m}$ is an equilibrium model of $\Gamma \cup \lbrace \neg \neg x \rbrace$ iff $\bm{m}$ is an equilibrium model of $\Gamma$ and $\trival{x} = \VRAI$.
\end{proposition}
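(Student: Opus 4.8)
The plan is to unfold Definition~\ref{def:EM} on both sides of the biconditional and to show that the only effect of adjoining $\neg\neg x$ to $\Gamma$ is to force $\trival{x}=\VRAI$, while leaving the $\trianglelefteq_\prop$-minimality condition untouched. Concretely, I would first isolate the following observation and use it repeatedly: for every \GTROIS-interpretation $\bm{m'}$, $\bm{m'}(\neg\neg x)=\VRAI$ holds iff $\crisp{~}{\bm{m'}}(x)=\VRAI$. This follows by applying Corollary~\ref{cor:negation} to the formula $\neg x$, which gives $\bm{m'}(\neg\neg x)=\VRAI$ iff $\crisp{~}{\bm{m'}}(\neg x)=\FAUX$, and then using that $\crisp{~}{\bm{m'}}$ is classical (Proposition~\ref{prop:classical}), so that $\crisp{~}{\bm{m'}}(\neg x)=\FAUX$ is equivalent to $\crisp{~}{\bm{m'}}(x)=\VRAI$. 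In particular, when the interpretation is crisp this just says its value on $x$ is $\VRAI$.

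For the ``only if'' direction I would argue as follows: if $\bm{m}$ is an equilibrium model of $\Gamma\cup\{\neg\neg x\}$, then $\bm{m}$ is a crisp classical model of $\Gamma$ (condition~(1) of Definition~\ref{def:EM}) and $\trival{\neg\neg x}=\VRAI$, hence $\trival{x}=\VRAI$ by the observation. To obtain $\trianglelefteq_\prop$-minimality of $\bm{m}$ with respect to $\Gamma$ alone, suppose towards a contradiction that $\bm{m'}\triangleleft_\prop\bm{m}$ with $\bm{m'}(\Gamma)=\VRAI$; the clause $\crisp{~}{\bm{m'}}\triangleq_\prop\crisp{~}{\bm{m}}$ in the definition of $\trianglelefteq_\prop$, together with $\crisp{~}{\bm{m}}=\bm{m}$ (crispness), gives $\crisp{~}{\bm{m'}}(x)=\trival{x}=\VRAI$, so $\bm{m'}(\neg\neg x)=\VRAI$ by the observation; thus $\bm{m'}(\Gamma\cup\{\neg\neg x\})=\VRAI$ with $\bm{m'}\triangleleft_\prop\bm{m}$, contradicting minimality of $\bm{m}$ for $\Gamma\cup\{\neg\neg x\}$. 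The ``if'' direction is easier: from $\bm{m}$ an equilibrium model of $\Gamma$ and $\trival{x}=\VRAI$, crispness and the observation give $\trival{\neg\neg x}=\VRAI$, hence $\bm{m}$ is a crisp classical model of $\Gamma\cup\{\neg\neg x\}$; and any $\bm{m'}\triangleleft_\prop\bm{m}$ with $\bm{m'}(\Gamma\cup\{\neg\neg x\})=\VRAI$ already has $\bm{m'}(\Gamma)=\VRAI$, contradicting minimality of $\bm{m}$ for $\Gamma$.

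I expect the only non-routine point to be the observation and its use inside the ``only if'' direction: one has to notice that $\trianglelefteq_\prop$ pins the crisp projection of the smaller interpretation $\bm{m'}$ down to $\bm{m}$ itself, and then that double negation behaves classically on crisp interpretations (Proposition~\ref{prop:classical} plus Corollary~\ref{cor:negation}), so that $\bm{m'}\models\neg\neg x$ comes for free and the two minimality problems coincide on interpretations below $\bm{m}$. Everything else is a mechanical unfolding of the two conditions in Definition~\ref{def:EM} and of $\trival{\Gamma}=\min\{\trival{\varphi}\mid\varphi\in\Gamma\}$.
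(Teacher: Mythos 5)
Your argument is correct, and since the paper states Proposition~\ref{prop:x:positive} without giving a proof, there is nothing to diverge from: your route through Corollary~\ref{cor:negation} and Proposition~\ref{prop:classical} (reducing $\trival{\neg\neg x}=\VRAI$ to $\crisp{~}{\bm{m}}(x)=\VRAI$, and then exploiting that $\trianglelefteq_{\prop}$ forces $\crisp{~}{\bm{m'}}=\crisp{~}{\bm{m}}$ so that every competitor below $\bm{m}$ automatically satisfies $\neg\neg x$) is exactly the machinery the authors set up for this purpose and is the natural intended proof. Both directions check out against Definition~\ref{def:EM} and the definition of $\trival{\Gamma}$ as a minimum.
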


\begin{proposition}\label{prop:x:negative} Let $\Gamma$ be a propositional theory and let $x$ be a propositional variable. For all \GTROIS- interpretations $\bm{m}$, $\bm{m}$ is an equilibrium model of $\Gamma \cup \lbrace \neg x \rbrace$ iff $\bm{m}$ is an equilibrium model of $\Gamma$ and $\trival{x}=\FAUX$.
\end{proposition}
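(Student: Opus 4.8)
The plan is to prove the two implications separately, after first recording the only fact about $\neg x$ that will be needed: for every \GTROIS-interpretation $\bm{m}$ we have $\trival{\neg x}=\VRAI$ precisely when $\trival{x}=\FAUX$, and $\trival{\neg x}=\FAUX$ otherwise. This follows at once from $\neg x \eqdef x \to \bot$ and the \GTROIS{} clause for $\to$ (and it is a special case of Corollary~\ref{cor:negation}, since $\crisp{~}{\bm{m}}(x)=\FAUX$ iff $\trival{x}=\FAUX$). Hence, for any $\bm{m}$, we have $\bm{m}(\Gamma \cup \lbrace \neg x \rbrace)=\VRAI$ iff $\bm{m}(\Gamma)=\VRAI$ and $\trival{x}=\FAUX$, while the crispness requirement~(1) of Definition~\ref{def:EM} concerns $\bm{m}$ alone and is unaffected by adding $\neg x$.

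For the ``if'' direction, assume $\bm{m}$ is an equilibrium model of $\Gamma$ with $\trival{x}=\FAUX$. Then $\bm{m}$ is crisp and $\bm{m}(\Gamma)=\VRAI$, so by the remark above $\bm{m}$ is a crisp model of $\Gamma \cup \lbrace \neg x \rbrace$, giving condition~(1) of Definition~\ref{def:EM}. For condition~(2), if some $\bm{m}' \triangleleft \bm{m}$ had $\bm{m}'(\Gamma \cup \lbrace \neg x \rbrace)=\VRAI$, then in particular $\bm{m}'(\Gamma)=\VRAI$, contradicting the $\trianglelefteq$-minimality of $\bm{m}$ with respect to $\Gamma$. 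So $\bm{m}$ is an equilibrium model of $\Gamma \cup \lbrace \neg x \rbrace$.

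For the ``only if'' direction, assume $\bm{m}$ is an equilibrium model of $\Gamma \cup \lbrace \neg x \rbrace$. Then $\bm{m}$ is crisp, $\bm{m}(\Gamma)=\VRAI$, and $\trival{\neg x}=\VRAI$, the last of which yields $\trival{x}=\FAUX$; it remains to establish the $\trianglelefteq$-minimality of $\bm{m}$ with respect to $\Gamma$. Let $\bm{m}'$ satisfy $\bm{m}'(\Gamma)=\VRAI$ and $\bm{m}' \triangleleft \bm{m}$. Then $\bm{m}' \trianglelefteq \bm{m}$, so $\bm{m}'(x) \le \trival{x}=\FAUX$, hence $\bm{m}'(x)=\FAUX$ and therefore $\bm{m}'(\neg x)=\VRAI$. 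Thus $\bm{m}'(\Gamma \cup \lbrace \neg x \rbrace)=\VRAI$ while $\bm{m}' \triangleleft \bm{m}$, contradicting the minimality of $\bm{m}$ for $\Gamma \cup \lbrace \neg x \rbrace$. Hence no such $\bm{m}'$ exists and $\bm{m}$ is an equilibrium model of $\Gamma$.

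I expect the crux to be the last step of the ``only if'' direction, namely that satisfaction of $\neg x$ is preserved when one passes to a $\trianglelefteq$-smaller interpretation: this is what makes the minimality of $\bm{m}$ for $\Gamma \cup \lbrace\neg x\rbrace$ transfer to minimality for $\Gamma$. Here it is an easy consequence of the $\le$-component of $\trianglelefteq$ together with $\trival{x}=\FAUX$, but it is exactly the point one must not skip; everything else is a routine unfolding of Definition~\ref{def:EM}. The whole argument parallels the proof of Proposition~\ref{prop:x:positive} with $\FAUX$ and $\VRAI$ interchanged (there, the $\triangleq$-component of $\trianglelefteq$, rather than the $\le$-component, does the corresponding work, relying on the crispness of $\bm{m}$).
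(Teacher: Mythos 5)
Your proof is correct. The paper states Proposition~\ref{prop:x:negative} without an accompanying proof, so there is nothing to contrast it with; your argument --- reducing satisfaction of $\Gamma\cup\lbrace\neg x\rbrace$ to satisfaction of $\Gamma$ together with $\trival{x}=\FAUX$, and observing that $\neg x$ persists downward along $\trianglelefteq$ because $\bm{m}'(x)\le\trival{x}=\FAUX$ --- is exactly the standard one, and your closing remark correctly identifies the point at which the proof of Proposition~\ref{prop:x:positive} differs.
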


Finally, it is worth to remark that EL and answer sets semantics coincide.
\begin{theorem}[\cite{pearce06a}] There is a bijection between equilibrium models and answer sets of any arbitrary propositional theory.
\end{theorem}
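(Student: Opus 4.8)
The plan is to make the bijection explicit and reduce the whole statement to one inductive lemma. Fix a propositional theory $\Gamma$ over the grammar~\eqref{grammar:propositional}. I take \emph{answer set} to mean the Ferraris reduct notion: a set $X \subseteq \prop$ is an answer set of $\Gamma$ iff $X$ is a classical model of $\Gamma$ and $X$ is $\subseteq$-minimal among the classical models of the \emph{reduct} $\Gamma^X$, where $\varphi^X$ is defined recursively by $\varphi^X = \bot$ whenever $X \not\models \varphi$, by $p^X = p$ when $X \models p$, and by $(\varphi \star \psi)^X = \varphi^X \star \psi^X$ for each binary connective $\star$ when $X \models \varphi \star \psi$. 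To a set $X \subseteq \prop$ I associate the \GTROIS{}-interpretation $\bm{m}_X$ with $\bm{m}_X(p) = \VRAI$ for $p \in X$ and $\bm{m}_X(p) = \FAUX$ otherwise. Each such $\bm{m}_X$ is crisp, and by Proposition~\ref{prop:classical} every crisp interpretation is of this form and $\GTROIS{}$-satisfaction on it coincides with classical satisfaction; hence condition~(1) of Definition~\ref{def:EM} already forces every equilibrium model to be some $\bm{m}_X$. The claim is that $X \mapsto \bm{m}_X$ restricts to a bijection from answer sets of $\Gamma$ onto equilibrium models of $\Gamma$.

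More generally, for $H \subseteq T \subseteq \prop$ let $\bm{m}_{H,T}$ assign $\VRAI$ on $H$, $\UNDEMI$ on $T \setminus H$ and $\FAUX$ elsewhere, so that $\crisp{\prop}{\bm{m}_{H,T}} = \bm{m}_T$; conversely, the interpretations $\bm{m}'$ with $\bm{m}' \trianglelefteq_{\prop} \bm{m}_T$ are exactly the $\bm{m}_{H,T}$ with $H \subseteq T$, and $\bm{m}' \triangleleft_{\prop} \bm{m}_T$ corresponds to $H \subsetneq T$. The key lemma is: for every formula $\varphi$ and all $H \subseteq T$,
\[
\bm{m}_{H,T}(\varphi) = \VRAI \quad\Longleftrightarrow\quad T \models \varphi \text{ (classically)} \quad\text{and}\quad H \models \varphi^T .
\]
This is proved by structural induction on $\varphi$. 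The companion ``falsity'' fact one needs along the way — $\bm{m}_{H,T}(\varphi) = \FAUX$ iff $T \not\models \varphi$ — is exactly persistency (Proposition~\ref{proof:presistency}) applied to $\bm{m}_{H,T}$, whose crisp version is $\bm{m}_T$. The atomic, $\bot$, $\wedge$ and $\vee$ cases drop out of the definitions of $\min$, $\max$ and of the reduct. For $\varphi \to \psi$: if $T \not\models \varphi \to \psi$, then $(\varphi\to\psi)^T = \bot$ and the falsity fact forces $\bm{m}_{H,T}(\varphi\to\psi) = \FAUX$, so both sides fail; if $T \models \varphi \to \psi$ but $T \not\models \varphi$, then $\varphi^T = \bot$, the falsity fact gives $\bm{m}_{H,T}(\varphi) = \FAUX$, hence $\bm{m}_{H,T}(\varphi\to\psi) = \VRAI$ and $H \models \bot \to \psi^T$ trivially, so both sides hold; in the remaining case $T \models \varphi$ and $T \models \psi$, one has $(\varphi\to\psi)^T = \varphi^T \to \psi^T$ and combines the induction hypotheses for $\varphi$ and $\psi$ with the falsity fact to align $\bm{m}_{H,T}(\varphi) \le \bm{m}_{H,T}(\psi)$ with $H \models \varphi^T \to \psi^T$. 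The negation clause is just $\psi = \bot$ and is consistent with Corollary~\ref{cor:negation}.

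Granting the lemma, the theorem follows. Taking $H = T = X$ gives $\bm{m}_X(\Gamma) = \VRAI$ iff $X \models \Gamma$ classically (equivalently, $X \models \Gamma^X$); so $\bm{m}_X$ is a model of $\Gamma$ exactly when $X$ is. Assuming this, the interpretations $\bm{m}' \triangleleft_{\prop} \bm{m}_X$ in Definition~\ref{def:EM}(2) are precisely the $\bm{m}_{H,X}$ with $H \subsetneq X$, and by the lemma (its $T = X$ conjunct being already established) $\bm{m}_{H,X}(\Gamma) = \VRAI$ iff $H \models \Gamma^X$. Therefore $\bm{m}_X$ is $\trianglelefteq_{\prop}$-minimal among models of $\Gamma$ iff no $H \subsetneq X$ is a classical model of $\Gamma^X$, i.e. iff $X$ is $\subseteq$-minimal among classical models of $\Gamma^X$. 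Hence $\bm{m}_X$ is an equilibrium model of $\Gamma$ iff $X$ is an answer set of $\Gamma$; since every equilibrium model is some $\bm{m}_X$ and $X \mapsto \bm{m}_X$ is injective, this is the desired bijection.

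The delicate step is the implication clause of the key lemma: one must keep the ``maximal unsatisfied subformula'' bookkeeping of the reduct correct through nested implications (and hence nested negations) and thread the falsity fact through the induction so that persistency can be invoked uniformly; everything else is routine. For a normal logic program the argument can be shortcut, since the Ferraris reduct of the implications $B \to H$ encoding its rules coincides with the Gelfond--Lifschitz reduct, reducing the claim to the familiar stable-model correspondence — but the general propositional statement needs the lemma above.
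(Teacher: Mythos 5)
Your argument is correct, but there is nothing in the paper to compare it against: this theorem is imported verbatim from Pearce's work (\cite{pearce06a}) and the paper gives no proof of it. What you have written is a faithful reconstruction of the standard argument (due essentially to Ferraris and Pearce): identify \GTROIS{}-interpretations refining a crisp $\bm{m}_T$ with pairs $H \subseteq T$, prove by structural induction that $\bm{m}_{H,T}(\varphi) = \VRAI$ iff $H$ classically satisfies the Ferraris reduct $\varphi^T$, and then read off that $\trianglelefteq_{\prop}$-minimality of $\bm{m}_X$ among models of $\Gamma$ is exactly $\subseteq$-minimality of $X$ among models of $\Gamma^X$. Your handling of the implication case, including the use of persistency (Proposition~\ref{proof:presistency}) to confine $\bm{m}_{H,T}(\varphi)$ to $\{\UNDEMI,\VRAI\}$ when $T \models \varphi$, is the right way to thread the induction, and your observation that every $\bm{m}' \triangleleft_{\prop} \bm{m}_T$ is some $\bm{m}_{H,T}$ with $H \subsetneq T$ matches the paper's definitions of $\crisp{\Sigma}{\bm{m}}$ and $\trianglelefteq_{\Sigma}$. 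The only point worth flagging is that the paper never formally fixes what ``answer set of an arbitrary propositional theory'' means, so your proof necessarily commits to a particular reduct; the Ferraris reduct is the correct choice here (it is the one under which the theorem holds for arbitrarily nested formulas, and it agrees with the Gelfond--Lifschitz reduct on the rule-to-implication translation the paper uses), but strictly speaking the theorem as stated is only as precise as that unstated definition.
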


\subsection{Quantified Answer Set Programming}\label{sec:qasp}

In this section we present the semantics published by~\cite{ua8046} and in~\cite{falaroscso21a}. However, as a difference, we allow any propositional formula (or theory) as a matrix of the QBF. The semantics defined in~\cite{falaroscso21a} are oriented towards satisfiability within answer set semantics and they are defined next.

\begin{definition}[from \cite{falaroscso21a}]\label{def:fandi:semantics}
A logic program $P$ is said to be \emph{satisfiable} if it has an answer set. The satisfiability of a QASP program $\overline{Q}\; P$ is defined as follows:
\begin{enumerate}[itemsep=0pt]
    \item $\exists x \; P$ is satisfiable if either $P \cup \lbrace \neg \neg x \rbrace$ or $P \cup \lbrace \neg x \rbrace$  is satisfiable.
    \item $\forall x \; P$ is satisfiable if both $P \cup \lbrace \neg \neg x \rbrace$  and $P \cup \lbrace \neg x \rbrace$ are satisfiable.
    \item $\exists x \overline{Q} \; P$ is satisfiable if either $\overline{Q}\left( P \cup \lbrace \neg \neg x\rbrace\right)$ or $\overline{Q}\left( P \cup \lbrace \neg x\rbrace\right)$ is satisfiable.
     \item $\forall x \overline{Q} \; P$ is satisfiable if both $\overline{Q}\left( P \cup \lbrace \neg \neg x \rbrace\right)$ and $\overline{Q}\left( P \cup \lbrace \neg x \rbrace\right)$ are satisfiable.
\end{enumerate}
\end{definition}

The satisfiability of a quantified variable is tested by adding $\neg \neg x$ (resp. $\neg x$) to $P$ for every quantified variable $x$. 
In view of Propositions~\ref{prop:x:negative} and~\ref{prop:x:positive}, those formulas allow selecting equilibrium models of $P$ where propositional variables occur either positively or negatively. 
Moreover, they prove the following complexity result.

\begin{definition}[from \cite{ua8046}]\label{def:igor:semantics}
 A logic program $P$ is said to be \emph{satisfiable} if it has an answer set. The satisfiability of a QASP program $\overline{Q}\; P$ is defined as follows: \
\begin{enumerate}[itemsep=0pt]
    \item $\exists x \; P$ is satisfiable if either $P \cup \lbrace x \rbrace$ or $P \cup \lbrace \neg x \rbrace$  is satisfiable.
    \item $\forall x \; P$ is satisfiable if both $P \cup \lbrace x \rbrace$  and $P \cup \lbrace \neg x \rbrace$ are satisfiable.
    \item $\exists x \overline{Q} \; P$ is satisfiable if either $\overline{Q}\left( P \cup \lbrace x\rbrace\right)$ or $\overline{Q}\left( P \cup \lbrace \neg x\rbrace\right)$ is satisfiable.
    \item $\forall x \overline{Q} \; P$ is satisfiable if both $\overline{Q}\left( P \cup \lbrace x\rbrace\right)$ and $\overline{Q}\left( P \cup \lbrace \neg x\rbrace\right)$ are satisfiable.
\end{enumerate}
\end{definition}

Stéphan's semantics differ from Fandinno et al. in the way variables are forced to be true within an answer set. 
In~\cite{ua8046} a variable $x$ is forced to become true by adding $x$ directly to the context. 
Therefore, $x$ becomes directly true. 
In~\cite{falaroscso21a} instead, $x$ is forced to become true by adding $\neg \neg x$ to the context. 
In view of Proposition~\ref{prop:x:positive}, any equilibrium model must satisfy $x$. Therefore, truth value of $x$ must be derivable from program itself. 
The following proposition can be used to show that the semantics of Definition~\ref{def:fandi:semantics} are weaker that the ones of Definition~\ref{def:igor:semantics}. 
 
\begin{proposition}\label{prop:weaker} 
For all theory $\Gamma$ and for all $x \in \prop$ and for all interpretations $\bm{m}$, if $\bm{m}$ is an equilibrium model of $\Gamma \cup \lbrace \neg \neg x\rbrace$ then $\bm{m}$ is an equilibrium model of $\Gamma \cup \lbrace x \rbrace$.
\end{proposition}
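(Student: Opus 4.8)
The plan is to derive the statement from Proposition~\ref{prop:x:positive} together with a one-line computation showing that any \GTROIS-interpretation satisfying $x$ also satisfies $\neg\neg x$. So suppose $\bm{m}$ is an equilibrium model of $\Gamma \cup \lbrace \neg\neg x\rbrace$. By Proposition~\ref{prop:x:positive}, $\bm{m}$ is an equilibrium model of $\Gamma$ and $\trival{x} = \VRAI$. From the first fact, $\bm{m}$ is crisp and $\trival{\Gamma} = \VRAI$ (Definition~\ref{def:EM}), and together with $\trival{x} = \VRAI$ this gives $\trival{\Gamma \cup \lbrace x\rbrace} = \VRAI$. Hence $\bm{m}$ already meets Condition~(1) of Definition~\ref{def:EM} for $\Gamma \cup \lbrace x\rbrace$, and it only remains to establish Condition~(2), i.e. $\trianglelefteq_{\prop}$-minimality among the models of $\Gamma \cup \lbrace x\rbrace$.

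For minimality I would argue by contradiction. Assume some interpretation $\bm{m}'$ with $\bm{m}' \triangleleft_{\prop} \bm{m}$ satisfies $\bm{m}'(\Gamma \cup \lbrace x\rbrace) = \VRAI$; in particular $\bm{m}'(x) = \VRAI$. Then, evaluating in \GTROIS{}, $\bm{m}'(\neg x) = \bm{m}'(x \to \bot) = \FAUX$ since $\bm{m}'(x) = \VRAI > \FAUX = \bm{m}'(\bot)$, and therefore $\bm{m}'(\neg\neg x) = \bm{m}'(\neg x \to \bot) = \VRAI$ since $\FAUX \le \FAUX$. Consequently $\bm{m}'(\Gamma \cup \lbrace \neg\neg x\rbrace) = \VRAI$ with $\bm{m}' \triangleleft_{\prop} \bm{m}$, contradicting the $\trianglelefteq_{\prop}$-minimality of $\bm{m}$ as an equilibrium model of $\Gamma \cup \lbrace \neg\neg x\rbrace$. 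Hence no such $\bm{m}'$ exists, and $\bm{m}$ is an equilibrium model of $\Gamma \cup \lbrace x\rbrace$.

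I do not expect a real obstacle: the core of the proof is the elementary fact that the set of \GTROIS-models of $x$ is contained in that of $\neg\neg x$, so every witness of non-minimality for $\Gamma \cup \lbrace x\rbrace$ is automatically a witness for $\Gamma \cup \lbrace \neg\neg x\rbrace$. The only point requiring a little care is that Condition~(2) ranges over arbitrary, possibly non-crisp, \GTROIS-interpretations $\bm{m}'$, so the double-negation step has to be checked in the three-valued setting rather than classically --- but, as shown, $\bm{m}'(x) = \VRAI$ alone forces $\bm{m}'(\neg x) = \FAUX$ independently of the other truth values, so the argument goes through unchanged. If one prefers to bypass Proposition~\ref{prop:x:positive}, the same reasoning works directly from Definition~\ref{def:EM}, recovering $\trival{x} = \VRAI$ from $\trival{\neg\neg x} = \VRAI$ and the crispness of $\bm{m}$ via Corollary~\ref{cor:negation}.
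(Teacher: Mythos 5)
Your proof is correct. The paper states Proposition~\ref{prop:weaker} without giving a proof, and your argument --- invoking Proposition~\ref{prop:x:positive} for condition~(1) and, for condition~(2), the observation that any \GTROIS-interpretation $\bm{m}'$ with $\bm{m}'(x)=\VRAI$ also satisfies $\bm{m}'(\neg\neg x)=\VRAI$, so every witness against minimality for $\Gamma\cup\lbrace x\rbrace$ is one for $\Gamma\cup\lbrace\neg\neg x\rbrace$ --- is precisely the natural argument the authors leave implicit.
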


\begin{corollary}\label{cor:weaker} 
If a quantified theory $\overline{Q}\Gamma$ is satisfiable w.r.t. Definition~\ref{def:fandi:semantics} then it is satisfiable w.r.t. Definition~\ref{def:igor:semantics}.
\end{corollary}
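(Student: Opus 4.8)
The plan is to establish, by induction on the length of the binder $\overline{Q}$, a statement slightly stronger than the corollary. Call a finite theory $\Delta$ a \emph{literal set} if each of its members has the form $y$, $\neg y$, or $\neg\neg y$ for some $y\in\prop$, and write $\Delta'$ for the literal set obtained from $\Delta$ by replacing every member of the form $\neg\neg y$ with $y$ (leaving the others untouched). I would prove: for every binder $\overline{Q}$, every theory $\Gamma$, and every literal set $\Delta$, if $\overline{Q}\,(\Gamma\cup\Delta)$ is satisfiable w.r.t.\ Definition~\ref{def:fandi:semantics}, then $\overline{Q}\,(\Gamma\cup\Delta')$ is satisfiable w.r.t.\ Definition~\ref{def:igor:semantics}. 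The corollary is then the instance $\Delta=\emptyset$.

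For the base case, $\overline{Q}=\EMPTYBINDER$, both notions of satisfiability unfold to ``the program has an answer set'', i.e.\ an equilibrium model, so it suffices to show that every equilibrium model of $\Gamma\cup\Delta$ is an equilibrium model of $\Gamma\cup\Delta'$. Writing $\Delta=\Delta_0\cup\{\neg\neg y_1,\dots,\neg\neg y_k\}$ with $\Delta_0$ free of double negations, I would chain $k$ applications of Proposition~\ref{prop:weaker}, replacing $\neg\neg y_i$ by $y_i$ one index at a time while absorbing all the other literals into the ambient theory played by ``$\Gamma$'' in that proposition; composing these inclusions gives the desired inclusion of equilibrium-model sets. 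For the inductive step write $\overline{Q}=Qx\,\overline{Q}'$. If $Q=\exists$, then satisfiability of $\exists x\,\overline{Q}'(\Gamma\cup\Delta)$ under Definition~\ref{def:fandi:semantics} means satisfiability of $\overline{Q}'(\Gamma\cup(\Delta\cup\{\neg\neg x\}))$ or of $\overline{Q}'(\Gamma\cup(\Delta\cup\{\neg x\}))$ under Definition~\ref{def:fandi:semantics}; since $\Delta\cup\{\neg\neg x\}$ and $\Delta\cup\{\neg x\}$ are again literal sets whose primed versions are $\Delta'\cup\{x\}$ and $\Delta'\cup\{\neg x\}$, the induction hypothesis yields satisfiability of $\overline{Q}'(\Gamma\cup\Delta'\cup\{x\})$ or of $\overline{Q}'(\Gamma\cup\Delta'\cup\{\neg x\})$ under Definition~\ref{def:igor:semantics}, which is exactly the defining condition for $\exists x\,\overline{Q}'(\Gamma\cup\Delta')$ there. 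The case $Q=\forall$ is the same with ``or'' replaced by ``and'' throughout, and the single-leading-quantifier clauses of the two definitions are covered by taking $\overline{Q}'=\EMPTYBINDER$.

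The hard part is not any individual step but choosing the right induction loading: a naive induction carrying only the corollary breaks down in the ``true'' branch of an $\exists$ (and likewise $\forall$), because Definition~\ref{def:fandi:semantics} threads $\neg\neg x$ through the recursion whereas Definition~\ref{def:igor:semantics} threads $x$, and the induction hypothesis on $\overline{Q}'$ then delivers satisfiability of $\overline{Q}'(\Gamma\cup\{\neg\neg x\})$ when we need that of $\overline{Q}'(\Gamma\cup\{x\})$ --- a gap that Proposition~\ref{prop:weaker} can close only once all the guessed literals have been accumulated, i.e.\ at a leaf of the recursion. Carrying the auxiliary literal set $\Delta$ defers every appeal to Proposition~\ref{prop:weaker} to exactly that point, namely the base case. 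The only thing to check with care is that the maps $\Delta\mapsto\Delta\cup\{\neg\neg x\}$, $\Delta\cup\{\neg x\}$ and $\Delta'\mapsto\Delta'\cup\{x\}$, $\Delta'\cup\{\neg x\}$ stay in lock-step along the recursion, which is immediate from the definition of $(\cdot)'$; everything else is a mechanical unfolding of Definitions~\ref{def:fandi:semantics} and~\ref{def:igor:semantics}.
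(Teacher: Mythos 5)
Your proof is correct and follows the route the paper intends: the corollary is derived from Proposition~\ref{prop:weaker} by induction on the binder, and your auxiliary literal set $\Delta$ is exactly the right strengthening to make that induction go through. The paper states the corollary without an explicit proof, so your careful handling of the induction loading (deferring every appeal to Proposition~\ref{prop:weaker} to the quantifier-free leaves, where the accumulated literals can be rewritten one at a time) is a faithful and complete reconstruction of the implicit argument.
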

\begin{lemma}\label{lem:different} The semantics proposed in Definition~\ref{def:fandi:semantics} and Definition~\ref{def:igor:semantics} are not equivalent. \end{lemma}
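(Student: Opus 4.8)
The plan is to establish non-equivalence by exhibiting a single quantified theory $\overline{Q}\, P$ that is satisfiable under Definition~\ref{def:igor:semantics} but not under Definition~\ref{def:fandi:semantics}. By Corollary~\ref{cor:weaker}, satisfiability w.r.t.\ Definition~\ref{def:fandi:semantics} always implies satisfiability w.r.t.\ Definition~\ref{def:igor:semantics}, so a single witness of this kind suffices. The witness I would take is the binder $\overline{Q}=\forall x$ together with the empty matrix $P=\emptyset$ (equivalently, any program from which $x$ is not derivable, e.g.\ $P=\{y\}$ for a fresh atom $y$); the point is that $P$ must have no equilibrium model in which $x$ is true.

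The first step is to record the equilibrium models of the one-variable theories involved. Every \GTROIS-interpretation is a model of $\emptyset$, and by Proposition~\ref{prop:classical} the classical models are exactly the crisp ones, so Definition~\ref{def:EM} leaves only the $\trianglelefteq_{\prop}$-minimal crisp interpretation: lowering any atom currently mapped to $\VRAI$ down to $\UNDEMI$ yields a strictly $\trianglelefteq_{\prop}$-smaller interpretation that is still a model, so the unique equilibrium model of $\emptyset$ is the all-false interpretation, in which $\trival{x}=\FAUX$. An analogous computation shows that $\{x\}$ has the equilibrium model sending $x$ to $\VRAI$ and every other atom to $\FAUX$, and that $\{\neg x\}$ has the all-false equilibrium model.

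The second step is to run the two definitions on $\forall x\, \emptyset$. Under Definition~\ref{def:igor:semantics}, case~2, satisfiability reduces to $P\cup\{x\}=\{x\}$ and $P\cup\{\neg x\}=\{\neg x\}$ both having an equilibrium model, which holds by the previous step; hence $\forall x\, \emptyset$ is satisfiable in the sense of Definition~\ref{def:igor:semantics}. Under Definition~\ref{def:fandi:semantics}, case~2, satisfiability requires in particular that $P\cup\{\neg\neg x\}=\{\neg\neg x\}$ have an equilibrium model. But by Proposition~\ref{prop:x:positive}, $\bm{m}$ is an equilibrium model of $\{\neg\neg x\}$ iff $\bm{m}$ is an equilibrium model of $\emptyset$ with $\trival{x}=\VRAI$, and the unique equilibrium model of $\emptyset$ has $\trival{x}=\FAUX$. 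Hence $\{\neg\neg x\}$ has no equilibrium model, so $\forall x\, \emptyset$ is not satisfiable in the sense of Definition~\ref{def:fandi:semantics}, which proves the two semantics differ.

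There is no genuinely hard step here; the only point requiring care is the degenerate equilibrium-model bookkeeping — specifically, checking that the all-false interpretation really is the \emph{unique} equilibrium model of the chosen matrix. A more permissive matrix such as the choice-like theory $\{x\vee\neg x\}$ would \emph{not} work, since it also admits an equilibrium model with $\trival{x}=\VRAI$, and then $\{x\vee\neg x\}\cup\{\neg\neg x\}$ remains satisfiable and the distinction between the two semantics collapses on this instance. Choosing a matrix that genuinely cannot derive $x$ is exactly what makes the separation go through, and this is where I would be most careful.
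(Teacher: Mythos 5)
Your proof is correct, but it separates the two semantics along a different axis than the paper does. You exhibit a one-variable program ($\forall x\,\emptyset$, or $\forall x\,\{y\}$) whose matrix cannot derive $x$, so that $P\cup\{\neg\neg x\}$ has no equilibrium model (by Proposition~\ref{prop:x:positive}) while $P\cup\{x\}$ does; together with Corollary~\ref{cor:weaker} this yields a clean separation at the level of \emph{satisfiability} itself. The paper instead uses $\forall x\,\exists y\,\exists z\,\{z\to x,\ \neg z\to y,\ \neg y\to z\}$, a formula that is satisfiable under \emph{both} definitions, and argues that the two definitions nevertheless induce different sets of accepted \emph{policies}: Stéphan's semantics admits a policy whose branch at $x=\VRAI$ evaluates $\{x,y\}$, which is not an answer set of the matrix, whereas Fandinno et al.'s semantics rejects that policy. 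Your argument is shorter and proves the lemma as literally stated in a more direct way; the paper's example buys the additional observation that the two semantics can diverge even when they agree on satisfiability --- precisely the situation in which a satisfiability-only tool such as \texttt{qasp2qbf} cannot tell them apart --- which is what motivates the policy-extraction machinery developed in the rest of the paper. Your side remarks are accurate and show the right care: the matrix must be unable to derive $x$, and a choice-like matrix such as $x\vee\neg x$ would indeed fail to separate the semantics. The only cosmetic caveat is that in your witnesses the quantified variable $x$ does not occur in the matrix; the paper nowhere forbids this, but a matrix such as $\{x\to y\}$ would avoid even that quibble.
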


\begin{proof}

Let us consider the following quantified propositional theory:

\begin{equation}
	\forall x \exists y \exists z \underbrace{\left\{
		z \to x, 
		\neg z \to y,
		\neg y \to z
		\right\}}_{\Gamma} \label{ex:difference}
\end{equation}

The answer sets of $\Gamma$ are $\lbrace x, z\rbrace$ and $\lbrace y \rbrace$.  
From the point of view of satisfiability, the following example is satisfiable with respect to both definitions: the reader can check that the policy of Figure~\ref{fig:different:ok} is accepted by the semantics of both Definition~\ref{def:fandi:semantics} and Definition~\ref{def:igor:semantics}. 
However, if we compute all possible policies associated to~\eqref{ex:difference} we will see that the policy of Figure~\ref{fig:different:nok} is accepted by the semantics of Definition~\ref{def:igor:semantics} but not by the semantics of Definition~\ref{def:fandi:semantics}. 
Notice that the policy of Figure~\ref{fig:different:nok} would lead to the evaluation of the answer set $\lbrace x,y\rbrace$, which is not an answer set of $\Gamma$. Since the first quantifier in~\eqref{ex:difference} is universal, the system \texttt{asp2qbf} only answers satisfiable but it does not provide more information allowing us to determine that the policy in Figure~\ref{fig:different:nok} is not generated by the semantics of Definition~\ref{def:fandi:semantics}.

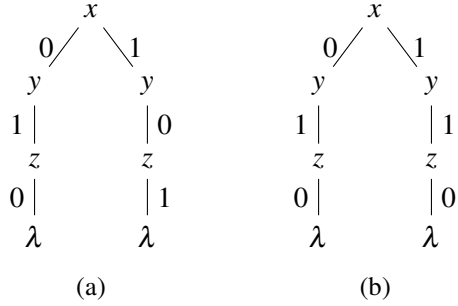
\begin{figure}[h!]\centering
	\begin{subfigure}{.23\textwidth}\centering
		\begin{tikzpicture}[level distance=1cm]
			\node{$x$}
			child{ node{$y$} 
				child{ node{$z$}
					child{ node{$\lambda$}
						edge from parent node[left] {$\FAUX$} 
					}
					edge from parent node[left] {$\VRAI$}
				}
				edge from parent node[left] {$\FAUX$}
			}
			child{ node{$y$} 
				child{ node{$z$}
					child{node{$\lambda$}
						edge from parent node[right] {$\VRAI$}          
					}
					edge from parent node[right] {$\FAUX$}          
				}
				edge from parent node[right] {$\VRAI$}
			};
		\end{tikzpicture}
		\caption{}
		\label{fig:different:ok}
	\end{subfigure} \begin{subfigure}{.23\textwidth}\centering
		\begin{tikzpicture}[level distance=1cm]
			\node{$x$}
			child{ node{$y$} 
				child{ node{$z$}
					child{node{$\lambda$}
						edge from parent node[left] {$\FAUX$}          
					}
					edge from parent node[left] {$\VRAI$}          
				}
				edge from parent node[left] {$\FAUX$}
			}
			child{ node{$y$} 
				child{ node{$z$}
					child{ node{$\lambda$}
						edge from parent node[right] {$\FAUX$} 
					}
					edge from parent node[right] {$\VRAI$}
				}
				edge from parent node[right] {$\VRAI$}
			};
		\end{tikzpicture}
		\caption{}
		\label{fig:different:nok}
	\end{subfigure}
	\caption{Two policies for the quantified theory~\eqref{ex:difference}.}
	\label{fig:policies:g3}
\end{figure}
\end{proof}

 \section{\QGTROIS\ Policies}\label{sec:policies} 
In the first place, we provide an interpretation of the second-order quantifiers within the context of \GTROIS{}: given a \GTROIS{} interpretation $\bm{m}$, the semantics of  quantifiers is defined as 
\begin{displaymath}
\trival{\forall x \varphi} = \min\left\{ \assigns{x}{i}{\varphi} \mid i \in \lbrace\FAUX,\frac{1}{2},\VRAI \rbrace\right\} \hspace{20pt} \trival{\exists x \varphi} = \max\left\{ \assigns{x}{i}{\varphi}\mid i \in \lbrace\FAUX,\frac{1}{2},\VRAI \rbrace\right\}.
\end{displaymath}
\noindent The resulting logic is called \emph{Quantified} \GTROIS{} (\QGTROIS) and follows~\cite{BaazCZ00}.
We extend the concept of QBF-policy to the case of \QGTROIS.
\begin{definition}[\QGTROIS-policy]
The set $TP_{QG3}(\overline{Q})$ of \QGTROIS-policies for a binder $\overline{Q}$ is defined, recursively, as follows:
\begin{eqnarray*}
    TP_{QG3}(\varepsilon)&=&\{\lambda\} \\
    TP_{QG3}(\exists x_i\ldots Q_n x_n) &=&  \lbrace \eGpol{x_i}{v}{\pi_{i+1}} \mid   v \in \lbrace \FAUX, \UNDEMI, \VRAI \rbrace, \pi_{i+1} \in TP_{QG3}(Q_{i+1} x_{i+1}\ldots Q_n x_n)\rbrace\\
    TP_{QG3}(\forall x_i\ldots Q_n x_n) & = & \lbrace \uGpol{x_i}{\pi^0_{i+1}}{\pi^{\UNDEMI}_{i+1}}{\pi^1_{i+1}}  \mid  \\
    && \hspace{50pt}\pi^0_{i+1}, \pi^{\UNDEMI}_{i+1}, \pi^1_{i+1} \in TP_{QG3}(Q_{i+1} x_{i+1}\ldots Q_n x_n)\rbrace.
\end{eqnarray*}
The operator ``;'' represents the sequential composition of \QGTROIS-policies and $\lbrace \FAUX , \UNDEMI, 1 \rbrace \to \pi_{i+1}$ refers to all possible functions from assignments of $x_i$ to values of $\lbrace \FAUX,\UNDEMI,1\rbrace$ to \QGTROIS-policies in $TP_{QG3}(Q_{i+1} x_{i+1}\ldots Q_n x_n)$. 
{A \GTROIS\ interpretation $\bm{m}$ is in a \QGTROIS-policy $\pi$ w.r.t. a binder $\overline{Q}$ ($\TRIVAL\in_{\overline{Q}} \pi$ in symbols), if:
\begin{enumerate}[label=\arabic*)]
    \item $\bm{m}\in_{\exists x \overline{Q}} \{ \eGpol{x}{v}{\pi} \}$ if $\trival{x}=v$ and $\bm{m}\in_{\overline{Q}} \pi$
    \item $\bm{m}\in_{\forall x \overline{Q}} \uGpol{x}{\pi^0}{\pi^{\UNDEMI}}{\pi^1}$ if $\trival{x}=v$ and $\bm{m}\in_{\overline{Q}} \pi^v$.
\end{enumerate}
By extension, a \GTROIS\ interpretation \TRIVAL\ is in a configuration $\tuple{\TRIVAL',\pi}$ w.r.t. a binder $\overline{Q}$, $\TRIVAL \in_{\overline{Q}} \tuple{\TRIVAL',\pi}$ in symbols, if $\TRIVAL\in_{\overline{Q}} \pi$ and $\TRIVAL \egalite{\prop\setminus \prop(\overline{Q})} \TRIVAL'$.
}

\end{definition}
In \QGTROIS{}, a \emph{configuration} is still a structure $\tuple{\bm{m}, \pi}$, but now $\bm{m}$ is a partial \GTROIS{} interpretation used to store the different assignments obtained during the analysis of the \QGTROIS-policy $\pi$. Given a QBF  $Q_1 x_1, \cdots, Q_n x_n \varphi$ and a configuration $\tuple{\bm{m}, \pi}$, the satisfaction relation is defined as 

\begin{itemize}
    \item $\tuple{\bm{m},\lambda}\models \varphi$ if $\trival{\varphi}=\VRAI$ 
    \item $\tuple{\bm{m},\pi}\models \exists x Q_{i+1} x_{i+1}\cdots Q_n\; x_n \varphi$ if for some $v \in \lbrace \FAUX, \UNDEMI, \VRAI\rbrace$, $\pi = \eQBFpol{x}{v}{\pi'}$  and $\tuple{\assign{x}{v},\pi'} \models Q_{i+1} x_{i+1}\cdots Q_n\; x_n \varphi$
    \item $\tuple{\bm{m},\pi}\models \forall x Q_{i+1} x_{i+1}\cdots Q_n\; x_n \varphi$ if $\pi = \uGpol{x}{\pi^0}{\pi^\UNDEMI}{\pi^1}$ and $\tuple{\assign{x}{\FAUX}, \pi^0} \models Q_{i+1} x_{i+1}\cdots Q_n\; x_n \varphi$, $\tuple{\assign{x}{\UNDEMI}, \pi^\UNDEMI} \models Q_{i+1} x_{i+1}\cdots Q_n\; x_n \varphi$ and $\tuple{\assign{x}{\VRAI}, \pi^1} \models Q_{i+1} x_{i+1}\cdots Q_n\; x_n \varphi$.
\end{itemize}

\begin{definition} 
A \QGTROIS-policy $\pi$ is said to satisfy a QBF  $Q_1 x_1 \cdots Q_n x_n \varphi$ if $\tuple{\EMPTY,\pi} \models Q_1 x_1 \cdots Q_n x_n \varphi$.
\end{definition}
\begin{example}[Example \ref{example:qel} continued]\label{example:qel2}
	Let us consider the formula $\psi = \forall x \exists z \left( \left(z \to x\right) \wedge \left( z \vee \neg z\right)\right)$. 
	The two \QGTROIS-policies satisfying $\psi$ are 
	$\pi_{(a)}=\uGpol{x}{\eGpol{z}{\FAUX}{\lambda}}{\eGpol{z}{\FAUX}{\lambda}}{\eGpol{z}{\FAUX}{\lambda}}$ and
	$\pi_{(b)}=\uGpol{x}{\eGpol{z}{\FAUX}{\lambda}}{\eGpol{z}{\FAUX}{\lambda}}{\eGpol{z}{\VRAI}{\lambda}}$
	and their tree-shape representations are shown in Figure~\ref{fig:policies:qg3}.
	\begin{figure}[h!]\centering
		\begin{subfigure}{.30\textwidth}

			\begin{tikzpicture}[level distance=1cm]
				\node{$x$}
				child{ node{$z$} 
					child{ node{$\lambda$}
						edge from parent node[left] {$\FAUX$}
					}
					edge from parent node[left] {$\FAUX$}
				}
				child{ node{$z$} 
					child{ node{$\lambda$}
						edge from parent node[left] {$\FAUX$}   
					}
					edge from parent node[left] {$\UNDEMI$}
				}
				child{ node{$z$} 
					child{ node{$\lambda$}
						edge from parent node[right] {$\FAUX$}          
					}
					edge from parent node[right] {$1$}
				}
				
				;
			\end{tikzpicture}
			\caption{}
			\label{fig:policies:qg3a}
		\end{subfigure}\begin{subfigure}{.30\textwidth}

			\begin{tikzpicture}[level distance=1cm]
				\node{$x$}
				child{ node{$z$} 
					child{ node{$\lambda$}
						edge from parent node[left] {$\FAUX$}
					}
					edge from parent node[left] {$\FAUX$}
				}
				child{ node{$z$} 
					child{ node{$\lambda$}
						edge from parent node[left] {$\FAUX$}   
					}
					edge from parent node[left] {$\UNDEMI$}
				}
				child{ node{$z$} 
					child{ node{$\lambda$}
						edge from parent node[right] {$1$}          
					}
					edge from parent node[right] {$1$}
				};
			\end{tikzpicture}
			\caption{}
			\label{fig:policies:qg3b}
		\end{subfigure}
		\caption{Two \QGTROIS-policies satisfying the QBF $\psi = \forall x \exists z \left( \left(z \to x\right) \wedge \left( z \vee \neg z\right)\right)$.}
		\label{fig:policies:qg3}
	\end{figure}
\end{example}

The following lemma shows that satisfiability according to \QGTROIS-semantics of a QBF $\overline{Q} \varphi$ with no free variables (ie a variable that has an occurrence in $\varphi$ but no one in $\overline{Q}$) depends only on the variables of $\varphi$. 

\begin{lemma}
Let $Q_1 x_1 \cdots Q_n x_n \varphi$ a QBF such that $\prop(\varphi)=\{x_1,\dots,x_n\}$ and \TRIVAL\ a \GTROIS\ interpretation.
    $\tuple{\EMPTY,\pi} \models Q_1 x_1 \cdots Q_n x_n \varphi$ if and only if $\tuple{\TRIVAL,\pi} \models Q_1 x_1 \cdots Q_n x_n \varphi$.
\end{lemma}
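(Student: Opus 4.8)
I would prove, by induction on the length $n$ of the binder $\overline{Q}=Q_1 x_1 \cdots Q_n x_n$, the following statement, which is slightly stronger than the lemma: for every such binder, every matrix $\varphi$, every policy $\pi$, and every pair of (possibly partial) \GTROIS\ interpretations $\TRIVAL_1,\TRIVAL_2$ that coincide on every atom of $\prop(\varphi)$ outside $\{x_1,\dots,x_n\}$ --- in symbols, $\TRIVAL_1 \egalite{\prop(\varphi)\setminus\{x_1,\dots,x_n\}} \TRIVAL_2$ --- one has $\tuple{\TRIVAL_1,\pi}\models\overline{Q}\varphi$ iff $\tuple{\TRIVAL_2,\pi}\models\overline{Q}\varphi$. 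First I would reduce the lemma to this: under the hypothesis $\prop(\varphi)=\{x_1,\dots,x_n\}$ we have $\prop(\varphi)\setminus\{x_1,\dots,x_n\}=\emptyset$, so $\EMPTY$ and $\TRIVAL$ vacuously coincide there and the stronger statement applies. The reason to generalise from the pair $\{\EMPTY,\TRIVAL\}$ to an arbitrary coinciding pair is that, once the outermost quantifier $Q_1 x_1$ is stripped, the atom $x_1$ may still occur in $\varphi$ while no longer occurring in the shorter binder $Q_2 x_2 \cdots Q_n x_n$; the induction therefore has to maintain the invariant ``the two interpretations stored in the configuration agree on the atoms of $\varphi$ not yet quantified'', which an update by a common value re-establishes.

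Then I would do the base case $n=0$: the binder is $\EMPTYBINDER$ and the target is the bare matrix $\varphi$. If $\pi\neq\lambda$ then no clause of the satisfaction relation applies, so both sides are (vacuously) false; otherwise $\tuple{\TRIVAL_i,\lambda}\models\varphi$ holds iff $\TRIVAL_i(\varphi)=\VRAI$. Since $\TRIVAL_1$ and $\TRIVAL_2$ agree on all of $\prop(\varphi)$, the standard fact that the truth value of a \GTROIS-formula depends only on the interpretation of the atoms occurring in it (a routine structural induction on $\varphi$) gives $\TRIVAL_1(\varphi)=\TRIVAL_2(\varphi)$, hence the equivalence. (If some atom of $\varphi$ is undefined in both interpretations, both sides are again false.)

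Next the inductive step: write $\overline{Q}=Q_1 x_1\,\overline{Q'}$ with $\overline{Q'}=Q_2 x_2 \cdots Q_n x_n$; we may assume $\pi$ has the shape the definition of $\models$ prescribes for $Q_1$, since otherwise neither $\tuple{\TRIVAL_1,\pi}\models\overline{Q}\varphi$ nor $\tuple{\TRIVAL_2,\pi}\models\overline{Q}\varphi$ holds. If $Q_1=\exists$, then $\pi=\eGpol{x_1}{v}{\pi'}$ for some $v\in\{\FAUX,\UNDEMI,\VRAI\}$ and some policy $\pi'$, and by definition $\tuple{\TRIVAL_i,\pi}\models\overline{Q}\varphi$ iff $\tuple{\TRIVAL_i[x_1\mapsto v],\pi'}\models\overline{Q'}\varphi$, where $\TRIVAL_i[x_1\mapsto v]$ denotes the update of $\TRIVAL_i$ sending $x_1$ to $v$. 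The key observation is that $\TRIVAL_1[x_1\mapsto v]$ and $\TRIVAL_2[x_1\mapsto v]$ coincide on $\prop(\varphi)\setminus\{x_2,\dots,x_n\}$: on $x_1$ both take the value $v$, and on every other atom of that set --- which then belongs to $\prop(\varphi)\setminus\{x_1,\dots,x_n\}$ --- they coincide because $\TRIVAL_1$ and $\TRIVAL_2$ do and the update changes only $x_1$. Applying the induction hypothesis to $\overline{Q'}$, $\pi'$ and this pair gives the required equivalence. The case $Q_1=\forall$ is identical, with $\pi=\uGpol{x_1}{\pi^0}{\pi^{\UNDEMI}}{\pi^1}$: here $\tuple{\TRIVAL_i,\pi}\models\overline{Q}\varphi$ unfolds to the conjunction of $\tuple{\TRIVAL_i[x_1\mapsto j],\pi^j}\models\overline{Q'}\varphi$ over $j\in\{\FAUX,\UNDEMI,\VRAI\}$, and one invokes the induction hypothesis on each of the three pairs $\big(\TRIVAL_1[x_1\mapsto j],\TRIVAL_2[x_1\mapsto j]\big)$ separately, so the three equivalences combine.

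The proof is essentially bookkeeping. The step I expect to need the most care --- and the reason the statement must be proved for an arbitrary coinciding pair rather than just for $\{\EMPTY,\TRIVAL\}$ --- is isolating the right induction invariant (agreement of the stored interpretations on the still-unquantified atoms of $\varphi$) and checking that an update by a common value preserves it; beyond that I do not anticipate any genuine obstacle.
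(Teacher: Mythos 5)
The paper states this lemma without giving any proof, so there is nothing to compare against line by line; judged on its own merits, your argument is correct. Strengthening the statement to a coincidence lemma (agreement of the two stored interpretations on $\prop(\varphi)$ minus the still-to-be-quantified variables) and inducting on the length of the binder is exactly the invariant needed to make the induction go through, and your verification that a common update to $x_1$ re-establishes the invariant for the shorter binder is the only non-trivial step. The one point worth tightening is the base case for partial interpretations: under the lemma's hypothesis $\prop(\varphi)=\{x_1,\dots,x_n\}$ every atom of $\varphi$ has been assigned by the time the binder is exhausted, so the ``undefined in both'' corner case you mention never actually arises in the application and need not be stipulated.
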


The notion of minimal (or equilibrium) model for a propositional formula $\varphi$ relies on the partial order $\trianglelefteq$ induced among the different interpretations. In the case of QBFs the truth values of the propositional variables are fixed by the interpretation of the quantifiers. As a consequence, its answer sets are represented in terms of \emph{equilibrium configurations}, defined below.

\begin{definition}[Equilibrium Configuration]
\label{def:ec}
Given a QBF $\overline{Q} \varphi$ and a configuration $\tuple{\bm{m}, \pi}$, we define the notion of  $\tuple{\bm{m}, \pi}$ being an equilibrium configuration recursively as follows:

 \begin{itemize}
    \item If $\overline{Q}\varphi$ has no quantifiers then $\tuple{\bm{m},\pi}$ is an equilibrium configuration of $\overline{Q}\varphi$ if $\pi=\lambda$ and $\bm{m}$ is an equilibrium model of $\varphi$.
    \item If $\overline{Q}\varphi$ is of the form $\exists x Q_{i+1} x_{i+1}\cdots Q_n x_n \varphi$ then $\tuple{\bm{m},\pi}$ is an equilibrium configuration of $\overline{Q}\varphi$ if both 
    \begin{enumerate*}[label=\arabic*)]
        \item $\pi = \eQBFpol{x}{v}{\pi'}$ with $v \in \lbrace \FAUX, 1\rbrace$ and 
        \item $\tuple{\assign{x}{v},\pi'}$ is an equilibrium configuration of $Q_{i+1} x_{i+1}\cdots Q_n x_n \varphi$
    \end{enumerate*}    
    \item If $\overline{Q}\varphi$ is of the form $\forall x Q_{i+1} x_{i+1}\cdots Q_n\; x_n \varphi$ then $\tuple{\bm{m},\pi}$ is an equilibrium configuration of $\overline{Q}\varphi$ if $\pi = \uQBFpol{x}{\pi^0}{\pi^1}$ and both 
        \begin{enumerate*}[label=\arabic*)]
            \item $\tuple{\assign{x}{\FAUX}, \pi^0}$ is an equilibrium configuration of $Q_{i+1} x_{i+1}\cdots Q_n x_n \varphi$ and 
            \item $\tuple{\assign{x}{\VRAI}, \pi^1}$ is an equilibrium configuration of $Q_{i+1} x_{i+1}\cdots Q_n x_n \varphi$. 
        \end{enumerate*}
\end{itemize}
\end{definition}
A QBF-policy $\pi$ is said to be an equilibrium policy of a quantified Boolean formula $\overline{Q} \varphi$ if there exists an equilibrium configuration $\tuple{\bm{m},\pi}$ of $\overline{Q}\varphi$.

\section{Algorithm for Equilibrium Policies}\label{sec:algorithmic}
We propose an algorithm that computes the QBF-policies of a QBF according to the equilibrium policies of the previous section.
Our algorithm uses configurations $\tuple{\TRIVAL,\pi}$ where \TRIVAL\ is a \GTROIS\ interpretation and $\pi$ is a QBF-policy instead of a \GTROIS-policy in order to compute the equilibrium policies.
To do this, we modify the satisfaction of QBFs as follows (the case \UNDEMI\ is deleted from the semantics of the quantifier but not from the interpretation \TRIVAL\ from \GTROIS):

\begin{itemize}
    \item $\tuple{\bm{m},\lambda}\models \varphi$ if $\trival{\varphi}=\VRAI$ 
    \item $\tuple{\bm{m},\pi}\models \exists x Q_{i+1} x_{i+1}\cdots Q_n\; x_n \varphi$ if  
    \begin{enumerate*}[label=\arabic*)]
        \item $\pi = \eQBFpol{x}{v}{\pi'}$, with $v \in \lbrace \FAUX, \VRAI\rbrace$ and 
        \item $\tuple{\assign{x}{v},\pi'} \models Q_{i+1} x_{i+1}\cdots Q_n\; x_n \varphi$
    \end{enumerate*}
    \item $\tuple{\bm{m},\pi}\models \forall x Q_{i+1} x_{i+1}\cdots Q_n\; x_n \varphi$ if $\pi = \uQBFpol{x}{\pi^0}{\pi^1}$ and both   
    \begin{enumerate*}[label=\arabic*)]
        \item $\tuple{\assign{x}{\FAUX}, \pi^0} \models Q_{i+1} x_{i+1}\cdots Q_n\; x_n \varphi$ and  
         \item $\tuple{\assign{x}{\VRAI}, \pi^1} \models Q_{i+1} x_{i+1}\cdots Q_n\; x_n \varphi$.
    \end{enumerate*}
\end{itemize}
Our algorithm realises the minimization thanks to a new structure, named $\Theta$-pair, that captures some (potential) sub-policies plus the non-crisp models that may delete by minimisation some crisp models (and, by the same token, some potential policies):

\begin{definition}[$\Theta^{\varphi}_{\overline{Q}}$-pair conditioned by an interpretation \TRIVAL]
\label{def:theta}

Let $\overline{Q}$ be a binder and $\varphi$ a quantifier-free formula.
    A $\Theta^{\varphi}_{\overline{Q}}$-pair conditioned by the interpretation \TRIVAL\ is a pair $(T,HC)$ consisting of a set of QBF-policies $T\subset TP_{QBF}(\overline{Q})$ such that for all $\pi \in T$, $\tuple{\TRIVAL,\pi}\models \overline{Q}\varphi$,  and a set $HC$ of non-crisp models of $\varphi$ such that for all $\TRIVAL' \in HC, \TRIVAL' \egalite{\prop(\varphi)\setminus \prop(\overline{Q})} \TRIVAL$ and  $\TRIVAL' \not\triangleleft_{\prop(\overline{Q})} \TRIVAL$.
    
\end{definition}

The following example illustrates the base cases $\overline{Q} = \exists z$ and $\overline{Q}=\forall z$ of our quantifier elimination algorithm for QASP.

\begin{example}[Example \ref{example:qel} continued]\label{example:HC} 
Let $\varphi = \left( \left(z \to x\right) \wedge \left( z \vee \neg z\right)\right)$ be the formula of Example~\ref{example:HC}. 
We consider a binder of the form $Q x\; Q z$ meaning that the variable $x$ is eliminated before $z$. In that case, we assume that $x$ has been (already) assigned with a value during the elimination process. If $Q z = \exists z$ we will reason as follows:
\begin{itemize}[itemsep=0pt]
    \item Since $\tuple{\TRIVAL_1,\lambda}\models\varphi$ and $\TRIVAL_2(\varphi)=\TRIVAL_3(\varphi)=\FAUX$  then $\tuple{\EMPTY^x_{\FAUX},\eQBFpol{z}{\FAUX}{\lambda}}\models \exists z\varphi$ and $(T^{\exists z}_{\FAUX}, HC^{\exists z}_{\FAUX}) = (\{\eQBFpol{z}{\FAUX}{\lambda}\},\emptyset)$ is a $\Theta^{\varphi}_{\exists z}$-pair conditioned by $\EMPTY^x_{\FAUX}$;
    \item Since $\TRIVAL_5(\varphi)=\TRIVAL_6(\varphi)=\UNDEMI$  and $\TRIVAL_4(\varphi)=\VRAI$ and $\TRIVAL_4$ is a non-crisp model of $\varphi$ then  $(T^{\exists z}_{\UNDEMI},HC^{\exists z}_{\UNDEMI})=(\emptyset,\{\TRIVAL_4\})$ is a $\Theta^{\varphi}_{\exists z}$-pair conditioned by $\EMPTY^x_{\UNDEMI}$; 
    \item Since $\tuple{\TRIVAL_7,\lambda}\models\varphi$  and  $\tuple{\TRIVAL_9,\lambda}\models\varphi$ and $\TRIVAL_8(\varphi)=\UNDEMI$ then $\tuple{\EMPTY^x_{\VRAI},\eQBFpol{z}{\FAUX}{\lambda}} \models \exists z\varphi$ and $\tuple{\EMPTY^x_{\VRAI},\eQBFpol{z}{\VRAI}{\lambda}} \models \exists z\varphi$ and $(T^{\exists z}_{\VRAI}, HC^{\exists z}_{\VRAI})=(\{\eQBFpol{z}{\FAUX}{\lambda},\eQBFpol{z}{\VRAI}{\lambda}\},\emptyset)$ is a $\Theta^{\varphi}_{\exists z}$-pair conditioned by $\EMPTY^x_{\VRAI}$.
\end{itemize}
If  $Q z = \forall z$ instead, we get
\begin{itemize}[itemsep=0pt]
\item Since $\TRIVAL_3(\varphi)=\FAUX$ and $\TRIVAL_2(\varphi)=\FAUX$ then $(T^{\forall z}_{\FAUX}, HC^{\forall z}_{\FAUX})=(\emptyset,\emptyset)$ is a $\Theta^{\varphi}_{\forall z}$-pair conditioned by $\EMPTY^x_{\FAUX}$;
    \item Since $\TRIVAL_4(\varphi)=\VRAI$  and $\TRIVAL_4$ is a non-crisp interpration  then $(T^{\forall z}_{\UNDEMI},HC^{\forall z}_{\UNDEMI})=(\emptyset,\{\TRIVAL_4\})$ is a $\Theta^{\varphi}_{\forall z}$-pair conditioned by $\EMPTY^x_{\UNDEMI}$; 
    \item Since $\tuple{\TRIVAL_7,\lambda}\models\varphi$  and  $\tuple{\TRIVAL_9,\lambda}\models\varphi$ and $\TRIVAL_8(\varphi)=\UNDEMI$ then $\tuple{\EMPTY^x_{\VRAI},\uQBFpol{x}{\lambda}{\lambda}} \models \forall z\varphi$ and $(T^{\forall z}_{\VRAI},HC^{\forall z}_{\VRAI})=(\{\uQBFpol{x}{\lambda}{\lambda}\},\emptyset)$ is a $\Theta^{\varphi}_{\forall z}$-pair conditioned by $\EMPTY^x_{\VRAI}$.
\end{itemize}
\end{example}

Our algorithm, named \QEM and reported in Algorithm \ref{alg:QEM}, is a quantifier elimination algorithm from the outermost to the innermost quantifier. 
At each call, the \QEM\ algorithm constructs as output a $\Theta^{\varphi}_{\overline{Q}}$-pair conditioned by \TRIVAL.
The output of the initial call $\qem{\overline{Q}}{\varphi}{\EMPTY}$ is a $\Theta^{\varphi}_{\overline{Q}}$-pair conditioned by \EMPTY, $(T,HC)$, such that $T$ is the set of the equilibrium policies of $\overline{Q}\varphi$.

Base cases for the existential quantifier (lines 4 to 6) and the universal quantifier (lines 9 to 13) of the \QEM\ algorithm apply Definition~\ref{def:ec} to binders with only one quantifier following a restrictive application of minimality to the quantified variable:  
\begin{itemize}[itemsep=0pt]
    \item If $\tuple{\bm{m},\pi}$ is an equilibrium configuration of $\exists x \varphi$ then $\pi = \eQBFpol{x}{v}{\lambda}$ with  
    \begin{enumerate}
\item $v=\FAUX$ and $\assign{x}{v}(\varphi)=\VRAI$ or
        \item $v=\VRAI$ and $\assign{x}{v}(\varphi)=\VRAI$ and  $\assign{x}{\UNDEMI}(\varphi) \leq \VRAI$.
    \end{enumerate}    
    \item If $\tuple{\bm{m},\pi}$ is an equilibrium configuration of $\forall x \varphi$ then $\pi = \uQBFpol{x}{\lambda}{\lambda}$ with
        \begin{enumerate}
\item $\assign{x}{\FAUX}(\varphi) = \VRAI$ and 
            \item $\assign{x}{\VRAI}(\varphi) = \VRAI$  and  $\assign{x}{\UNDEMI}(\varphi) \leq \VRAI$. 
        \end{enumerate}

\end{itemize}
Equilibrium configurations cannot be constructed exclusively based on the interpretation \TRIVAL.
Clearly, the first item of Definition~\ref{def:ec} imposes a more global condition: given a quantified formula $\overline{Q}\varphi$ and an equilibrium configuration $\tuple{\TRIVAL,\pi}$ of $\overline{Q}\varphi$, every $\TRIVAL'\in_{\overline{Q}} \tuple{\TRIVAL,\pi}$ is an equilibrium model of $\varphi$. 
This condition is fulfilled in our algorithm thanks to the non-crisp part of the $\Theta^{\varphi}_{\overline{Q}}$-pair conditioned by the updates of the interpretation \TRIVAL.

\begin{example}[Example \ref{example:HC} continued]\label{example:noncrisp}
$\tuple{\EMPTY^x_{\VRAI},\eQBFpol{z}{\FAUX}{\lambda}} \models \exists z\varphi$ but $\TRIVAL_7\in_{\exists z}\tuple{\EMPTY^x_{\VRAI},\eQBFpol{z}{\FAUX}{\lambda}}$ and $\TRIVAL_4 \triangleleft \TRIVAL_7$, $\TRIVAL_4$ is a non-crisp model of $\varphi$, element of the non-crisp part of $(\{\eQBFpol{z}{\FAUX}{\lambda}\},\{\TRIVAL_4\})$, the $\Theta^{\varphi}_{\exists z}$-pair conditioned by $\EMPTY^x_{\UNDEMI}$. 
So $\bm{m_7}$ is not an equilibrium model and $\tuple{\EMPTY,\eQBFpol{x}{\VRAI}{\eQBFpol{z}{\FAUX}{\lambda}}} \not\models \exists x \exists z\varphi$.
\end{example}
We define two new operators, the \MCEXISTS\ operator for $\exists$ quantifier and the \MCFORALL\ operator for $\forall$ quantifier, to combine, after the elimination of the quantifier $Qx$, the $\Theta^{\varphi}_{\overline{Q}}$-pairs $(T_0,HC_0)$, $(T_{\UNDEMI},HC_{\UNDEMI})$ and $(T_1,HC_1)$ conditioned by resp. $\assign{x}{\FAUX}$, $\assign{x}{\UNDEMI}$ and $\assign{x}{\VRAI}$, in a new $\Theta^{\varphi}_{Qx\overline{Q}}$-pairs conditioned by \TRIVAL.

\begin{definition}[\MCEXISTS\ operator]
\label{def:mcexists}
Let $\overline{Q}$ be a binder and $\varphi$ a quantifier-free formula and $x \in \prop(\varphi) \setminus \prop(\overline{Q})$.
Let $(T_0,HC_0)$, $(T_{\UNDEMI},HC_{\UNDEMI})$, and $(T_1,HC_1)$ some $\Theta_{\overline{Q}}$-pairs conditioned by the interpretations,  respectively, \assign{x}{\FAUX}, \assign{x}{\UNDEMI} and \assign{x}{\VRAI}.
The \MCEXISTS\ operator is defined as follows:
$\mcexists{x}{(T_0,HC_0)}{(T_{\UNDEMI},HC_{\UNDEMI})}{\linebreak[1](T_1,HC_1)} = ({T}_{\exists x},{HC}_{\exists x})$ where
\begin{eqnarray}
{T}_{\exists x} & = & \lbrace \eQBFpol{x}{\FAUX}{\pi} ~|~ \pi \in T_0\} \nonumber\\
                      & \cup  & \lbrace \eQBFpol{x}{\VRAI}{\pi} \mid \pi \in T_1,\; \TRIVAL_1 \in_{\BINDER} \pi,\; \TRIVAL_1 \egalite{\prop(\varphi)\setminus\prop(\exists x\overline{Q})} \TRIVAL,\nonumber \\
                      && \hspace{45pt} \hbox{ there is no } \TRIVAL' \in HC_{\UNDEMI} \hbox{ s.t. } \TRIVAL' \triangleleft \TRIVAL_1 \hbox{ and } \label{mcE:hcundemi}\\
                      &&  \hspace{45pt} \hbox{ there is no } \pi' \in T_{\UNDEMI} \hbox{ s.t. }  \TRIVAL_1 \in_{\BINDER} \pi'\rbrace \label{mcE:tundemi}\\
{HC}_{\exists x} &=&      
     \lbrace \TRIVAL' \mid \pi \in T_{\UNDEMI} \hbox{ and } \TRIVAL'\in_{\BINDER} \pi\rbrace \cup \bigcup \limits_{i\in \{\FAUX, \UNDEMI, \VRAI\}} HC_{i}\nonumber
\end{eqnarray}
\end{definition}
By Definition~\ref{def:theta} and the definition of $\triangleleft$, if $\pi \in T_{\FAUX}$ then $\FAUX;\pi \in T_{\exists x}$.
If there exists a $\TRIVAL'$ such that~\eqref{mcE:hcundemi} holds or $\pi'$ is such that~\eqref{mcE:tundemi} holds, then \TRIVAL\ is no more $\trianglelefteq$-minimal and $\tuple{\TRIVAL, \VRAI; \pi}$ cannot be an equilibrium configuration, otherwise $\VRAI;\pi \in T_{\exists x}$.

\begin{example}[Example \ref{example:HC} continued]\label{example:mcE}
Given the QBF $\exists x \exists z \varphi$, the $\MCEXISTS$  applied on $x$ returns  
\begin{equation*}
 \mcexists{x}{(T^{\exists z}_0,HC^{\exists z}_0)}{(T^{\exists z}_{\UNDEMI},HC^{\exists z}_{\UNDEMI})}{(T^{\exists z}_1,HC^{\exists z}_1)} =  (\{\eQBFpol{x}{\FAUX}{\eQBFpol{z}{\FAUX}{\lambda}}, \eQBFpol{x}{\VRAI}{\eQBFpol{z}{\VRAI}{\lambda}}\},\{\TRIVAL_4\}).  
\end{equation*}
\end{example}

The \MCFORALL\ operator is similar to the \MCEXISTS\ operator. 
The difference is that the \MCFORALL\ operator have simultaneously the both conditions on $T_0$ and $T_1$.

\begin{definition}[\MCFORALL\ operator]
\label{def:mcforall}
Let $\overline{Q}$ be a binder and $\varphi$ a quantifier-free formula and $x \in \prop(\varphi) \setminus \prop(\overline{Q})$.
Let $(T_0,HC_0)$, $(T_{\UNDEMI},HC_{\UNDEMI})$, and $(T_1,HC_1)$ some $\Theta_{\overline{Q}}$-pairs conditioned by the interpretations respectively, \assign{x}{\FAUX}, \assign{x}{\UNDEMI} and \assign{x}{\VRAI}.
The \MCFORALL\ operator is defined as follows:
$\mcforall{x}{(T_0,HC_0)}{(T_{\UNDEMI},HC_{\UNDEMI})}{\linebreak[1](T_1,HC_1)} = ({T}_{\forall x},{HC}_{\forall x})$ where
\begin{eqnarray}
{T}_{\forall x} &=& \lbrace \uQBFpol{x}{\pi_0}{\pi_1} \mid \pi_0 \in T_0,\; \pi_1 \in T_1, \nonumber\\
    && \hspace{45pt} \TRIVAL_1 \in_{\BINDER} \pi_1,\; \TRIVAL_1 \egalite{\prop(\varphi)\setminus\prop(\forall x\overline{Q})} \TRIVAL, \nonumber\\
    &&\hspace{45pt} \hbox{there is no } \TRIVAL' \in HC_{\UNDEMI} \hbox{ s.t. } \TRIVAL' \triangleleft \TRIVAL_1 \hbox{ and }\nonumber\\
    && \hspace{45pt} \hbox{there is no } \pi' \in T_{\UNDEMI} \hbox{ s.t. } \TRIVAL_1 \in_{\BINDER}\pi'\nonumber \rbrace
\end{eqnarray}
\noindent and ${HC}_{\forall x}$ is defined as ${HC}_{\exists x}$.
\end{definition}

\begin{example}[Example \ref{example:HC} continued]\label{example:mcV}
Given  $\varphi=\left( \left(z \to x\right) \wedge \left( z \vee \neg z\right)\right)$, we conclude that, for the formula $(\forall x \exists z \varphi)$,
\begin{equation*}
 \mcforall{x}{(T^{\exists z}_0,HC^{\exists z}_0)}{(T^{\exists z}_{\UNDEMI},HC^{\exists z}_{\UNDEMI})}{(T^{\exists z}_1,HC^{\exists z}_1)} =    (\{\uQBFpol{x}{\eQBFpol{z}{\FAUX}{\lambda}}{\eQBFpol{z}{\VRAI}{\lambda}}\},\{\TRIVAL_4\}).
 \end{equation*}
\end{example}

\begin{algorithm}[h!]
\SetKwInOut{Input}{input}\SetKwInOut{Output}{output}\SetKwData{T}{$T_x$}\SetKwData{HC}{$HC_x$}
\Input{A non empty binder $\overline{Q}$, a propositional formula $\varphi$ and a $\GTROIS$ interpretation $\TRIVAL$.}
\Output{A $\Theta^{\varphi}_{\overline{Q}}$-pair conditioned by \TRIVAL}
$\T \gets \emptyset$\;
$\HC \gets \emptyset$\;
\uIf{$\overline{Q}= \exists x$}{
\lIf{$\inters{x}{\UNDEMI}{\varphi} = \VRAI$}{$\HC \gets  \HC \cup \{\assign{x}{\UNDEMI}\}$} 
\lIf{$\inters{x}{\FAUX}{\varphi} = \VRAI$}{$\T \gets  \T \cup \{\eQBFpol{x}{\FAUX}{\lambda}\}$}
\lIf{$\inters{x}{\UNDEMI}{\varphi} \leq \UNDEMI$ and $\inters{x}{1}{\varphi} = \VRAI$}{$\T \gets  \T \cup \{\eQBFpol{x}{\VRAI}{\lambda}\}$}
\Return{$(\T,\HC)$}\;
}
\uIf{$\overline{Q}= \forall x$}{

\lIf{$\inters{x}{\FAUX}{\varphi} = \VRAI$ and $\inters{x}{\UNDEMI}{\varphi} \leq \UNDEMI$ and $\inters{x}{1}{\varphi} =\VRAI$}{
    \Return{$(\{\uQBFpol{x}{\lambda}{\lambda}\},\emptyset)$}
}    
\lIf{$\inters{x}{\UNDEMI}{\varphi} = \VRAI$}{$\HC \gets \HC \cup \{\assign{x}{\UNDEMI}\}$}  
\lIf{$\inters{x}{\FAUX}{\varphi} = \VRAI$ and $\assign{x}{\FAUX}$ is not crisp}{$\HC \gets \HC \cup \{\assign{x}{\FAUX}\}$}  
\lIf{$\inters{x}{\VRAI}{\varphi} = \VRAI$ and $\assign{x}{\VRAI}$ is not crisp}{$\HC \gets \HC \cup \{\assign{x}{\VRAI}\}$}  
\Return{$(\emptyset, \HC)$}\;
}
\uIf{$\overline{Q} = Qx \overline{Q'}$}{
    $(T_0, HC_0) = QEM(\overline{Q'},\varphi,\bm{m}^{x}_{\FAUX})$\;
    $(T_{\UNDEMI}, HC_{\UNDEMI}) = QEM(\overline{Q'},\varphi,\bm{m}^{x}_{\UNDEMI})$\;
    $(T_1, HC_1) = QEM(\overline{Q'},\varphi,\bm{m}^{x}_{1})$\;
    \lIf{$Q=\exists$}{
     \Return{$\mcexists{x}{(T_0, HC_0)}{(T_{\UNDEMI}, HC_{\UNDEMI})}{(T_1, HC_1)} $}
    }
    \lElse{ \Return{$\mcforall{x}{(T_0, HC_0)}{(T_{\UNDEMI},HC_{\UNDEMI})}{(T_1, HC_1)} $}}
}
\caption{$QEM$}\label{alg:QEM}
\end{algorithm}

\begin{theorem}\label{th:QEM}
Given a QBF $\overline{Q} \varphi$ and a \GTROIS{} interpretation \TRIVAL, 
\qem{\overline{Q}}{\varphi}{\TRIVAL} computes the set of equilibrium policies of $\overline{Q} \varphi$ conditioned by $\TRIVAL$.
\end{theorem}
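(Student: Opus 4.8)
The natural approach is induction on the length of the binder $\overline{Q}$, following the recursive structure of the \QEM\ algorithm. The statement to prove is a conjunction: on the one hand $T$ (the first component of the output) is \emph{exactly} the set of QBF-policies $\pi$ such that $\tuple{\TRIVAL,\pi}$ is an equilibrium configuration of $\overline{Q}\varphi$; on the other hand, $HC$ (the second component) is a set of non-crisp models of $\varphi$ agreeing with $\TRIVAL$ on $\prop(\varphi)\setminus\prop(\overline{Q})$ and not strictly below $\TRIVAL$ on $\prop(\overline{Q})$, rich enough to later witness the failure of minimality. Because the two components are mutually dependent in the recursive calls (the $HC_{\UNDEMI}$ set is what licenses deleting a candidate policy from $T_{\exists x}$ or $T_{\forall x}$), the induction hypothesis must carry \emph{both} a soundness/completeness claim for $T$ and a precise characterisation of $HC$ — ideally, that $HC$ contains \emph{every} non-crisp model $\TRIVAL'$ of $\varphi$ with $\TRIVAL' \egalite{\prop(\varphi)\setminus\prop(\overline{Q})} \TRIVAL$ such that $\TRIVAL'\in_{\overline{Q}}\tuple{\TRIVAL',\pi}$ for some $\pi\in TP_{QBF}(\overline{Q})$ that is ``locally consistent'' in the relevant sense. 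Getting this invariant exactly right is the crux.

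First I would verify the base cases $\overline{Q}=\exists x$ (lines 4--7) and $\overline{Q}=\forall x$ (lines 9--14). Here one unfolds Definition~\ref{def:ec} together with the restrictive minimality characterisations stated just before Definition~\ref{def:mcexists}: for $\exists x\varphi$, $\tuple{\TRIVAL,\eQBFpol{x}{v}{\lambda}}$ is an equilibrium configuration iff $v=\FAUX$ and $\assign{x}{\FAUX}(\varphi)=\VRAI$, or $v=\VRAI$, $\assign{x}{\VRAI}(\varphi)=\VRAI$ and $\assign{x}{\UNDEMI}(\varphi)\le\UNDEMI$ (the last condition says the $\UNDEMI$-candidate does not beat $\assign{x}{\VRAI}$ by $\triangleleft$, using Proposition~\ref{prop:classical} and the fact that $\assign{x}{\UNDEMI}$ is the only potential strict predecessor once $x$ is the only quantified variable). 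This matches lines 5--6 exactly. One then checks that the $HC$ set produced (line 4, resp. lines 11--13) is precisely the set of non-crisp models $\assign{x}{i}$ of $\varphi$ — which for $\exists x$ can only be $\assign{x}{\UNDEMI}$, and for $\forall x$ can be $\assign{x}{\FAUX}$, $\assign{x}{\UNDEMI}$ or $\assign{x}{\VRAI}$ when those are non-crisp — agreeing with $\TRIVAL$ off $\{x\}$ and not $\triangleleft_{\{x\}}\TRIVAL$ (which holds vacuously, as these differ from $\TRIVAL$ only at $x$ and $\TRIVAL$ is unconstrained at $x$ in a conditioned pair). The technical care here is matching the side condition ``$\assign{x}{\FAUX}$ is not crisp'' etc. against Definition~\ref{def:theta}.

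For the inductive step $\overline{Q}=Qx\overline{Q'}$, I would apply the induction hypothesis to the three recursive calls, obtaining that $T_0,T_{\UNDEMI},T_1$ are exactly the equilibrium-policy sets conditioned by $\assign{x}{\FAUX},\assign{x}{\UNDEMI},\assign{x}{\VRAI}$ and that $HC_0,HC_{\UNDEMI},HC_1$ are the corresponding non-crisp-model sets. Then, unfolding Definition~\ref{def:ec} for $\exists x\overline{Q'}\varphi$: $\tuple{\TRIVAL,\eQBFpol{x}{v}{\pi'}}$ is an equilibrium configuration iff $v\in\{\FAUX,\VRAI\}$ and $\tuple{\assign{x}{v},\pi'}$ is an equilibrium configuration of $\overline{Q'}\varphi$ — so $\pi'\in T_v$ — \emph{and additionally} every $\TRIVAL'\in_{\exists x\overline{Q'}}\tuple{\TRIVAL,\eQBFpol{x}{v}{\pi'}}$ is an equilibrium model of $\varphi$. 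The point of the $v=\VRAI$ clauses~\eqref{mcE:hcundemi}--\eqref{mcE:tundemi} is that the only way a witness $\TRIVAL'$ could be non-minimal is via some $\TRIVAL''\triangleleft\TRIVAL'$ with $\TRIVAL''$ non-crisp, and since $\TRIVAL'\egalite{}\assign{x}{\VRAI}$ off $\prop(\overline{Q'})$, such a $\TRIVAL''$ must agree with $\TRIVAL'$ on $x$ or be below it — but the only non-crisp value below $\VRAI$ at $x$ is $\UNDEMI$, so $\TRIVAL''$ must live in the $\UNDEMI$-branch, i.e.\ either $\TRIVAL''\in HC_{\UNDEMI}$ (condition~\eqref{mcE:hcundemi}) or $\TRIVAL''$ is itself a crisp equilibrium model reachable by some $\pi'\in T_{\UNDEMI}$ with $\TRIVAL_1\in_{\overline{Q'}}\pi'$ — wait, $\TRIVAL''$ non-crisp excludes the latter, so condition~\eqref{mcE:tundemi} instead rules out the case where $\TRIVAL_1$ (the crisp witness in the $\VRAI$-branch) coincides with a witness already in a $T_{\UNDEMI}$-policy, which would make $\tuple{\TRIVAL,\VRAI;\pi}$ redundant/non-equilibrium by the argument of Example~\ref{example:noncrisp}. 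I would formalise this as two lemmas: (i) soundness — every $\pi\in T_{\exists x}$ yields an equilibrium configuration; (ii) completeness — every equilibrium policy appears. The analogous argument for $\MCFORALL$ differs only in that both branches $\pi_0\in T_0$ and $\pi_1\in T_1$ must be retained simultaneously, and the $\UNDEMI$-filtering applies to the $\VRAI$-witness exactly as before (the $\FAUX$-branch needs no filtering since $\FAUX$ has no non-crisp value strictly below it). Finally I would check that $HC_{\exists x}=HC_{\forall x}$ as defined still satisfies the $\Theta$-pair invariant: its elements are non-crisp models of $\varphi$ (those inherited from $HC_i$, plus the witnesses of $T_{\UNDEMI}$-policies, which are equilibrium — hence crisp! — so this sub-case needs the observation that a witness $\TRIVAL'\in_{\overline{Q'}}\pi$ for $\pi\in T_{\UNDEMI}$, conditioned by $\assign{x}{\UNDEMI}$, has $\trival{x}=\UNDEMI$ and is therefore non-crisp even though it is an equilibrium model of $\varphi$ relative to the sub-binder). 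I expect the main obstacle to be precisely this bookkeeping of \emph{which} interpretations land in $HC$ and proving the invariant is tight enough — neither too weak to block a genuine equilibrium policy nor too strong to admit a spurious one — since the correctness of the whole minimisation hinges on the $HC$-component being an exact, not merely sound, over-approximation of the ``minimality-violating'' interpretations.
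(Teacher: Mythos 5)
The paper states Theorem~\ref{th:QEM} without any proof, so there is nothing to compare your attempt against; judged on its own, your plan has the right overall shape (induction on the binder, a joint invariant on $T$ and $HC$) but the induction hypothesis you fix for $T$ is too strong, and this is not mere bookkeeping: it is falsified by the algorithm's own base case. You assert that each recursive call returns \emph{exactly} the policies $\pi$ for which $\tuple{\TRIVAL,\pi}$ is an equilibrium configuration. But in Example~\ref{example:HC} the call for $\exists z$ conditioned by $\EMPTY^x_{\VRAI}$ returns $T^{\exists z}_{\VRAI}=\{\eQBFpol{z}{\FAUX}{\lambda},\eQBFpol{z}{\VRAI}{\lambda}\}$, while $\tuple{\EMPTY^x_{\VRAI},\eQBFpol{z}{\FAUX}{\lambda}}$ is \emph{not} an equilibrium configuration of $\exists z\,\varphi$: its witness $\bm{m_7}$ is dominated by the non-crisp $\bm{m_4}$ (Example~\ref{example:noncrisp}). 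The base-case characterisation you invoke is stated in the paper only as a necessary condition (``if $\tuple{\bm{m},\pi}$ is an equilibrium configuration then \dots''), not an ``iff''; the algorithm deliberately returns an over-approximation at intermediate levels, checking minimality only against reassignments of the variable just eliminated and deferring the test against interpretations that differ at already-eliminated variables to the later $\MCEXISTS$/$\MCFORALL$ steps. Worse, if your hypothesis held, then by Definition~\ref{def:ec} every $\eQBFpol{x}{\VRAI}{\pi}$ with $\pi\in T_1$ would already be an equilibrium policy, so the filtering clauses~\eqref{mcE:hcundemi}--\eqref{mcE:tundemi} would be discarding genuine equilibrium policies: your inductive step would end up proving the algorithm incomplete rather than correct.

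The repair is to weaken the invariant on $T$ to a \emph{local} minimality statement --- roughly, $\pi\in T$ iff $\tuple{\TRIVAL,\pi}\models\overline{Q}\varphi$ (two-valued quantifier semantics), every witness $\TRIVAL'\in_{\overline{Q}}\tuple{\TRIVAL,\pi}$ is crisp on $\prop(\overline{Q})$, and no model $\TRIVAL''$ of $\varphi$ with $\TRIVAL''\triangleleft\TRIVAL'$ differs from $\TRIVAL'$ only on $\prop(\overline{Q})$ --- paired with a \emph{completeness} (not just soundness) statement for $HC$: together with the witnesses of $T_{\UNDEMI}$, it must contain every non-crisp model of $\varphi$ that agrees with $\TRIVAL$ outside $\prop(\overline{Q})$, since these are exactly the potential dominators detected at later stages. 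Only at the top call, where all variables of $\varphi$ are quantified, does local minimality collapse to Definition~\ref{def:EM} and yield the theorem. You clearly sense this difficulty (``getting this invariant exactly right is the crux''), and your analysis of why a dominating interpretation must live in the $\UNDEMI$-branch is essentially correct, but as written the induction cannot be carried out: the base case does not establish your hypothesis, and the inductive step contradicts it.
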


\begin{example}[Examples \ref{example:mcE} and \ref{example:mcV} continued]\label{example:policies}
With  $\varphi=\left( \left(z \to x\right) \wedge \left( z \vee \neg z\right)\right)$ we have that
\begin{enumerate}
\item $\exists x \exists z \varphi$ has two equilibrium policies: $\eQBFpol{x}{\FAUX}{\eQBFpol{z}{\FAUX}{\lambda}}$ and  $\eQBFpol{x}{\VRAI}{\eQBFpol{z}{\VRAI}{\lambda}}$. The reader can verify that the formula $\exists z \exists x \varphi$ has the same equilibrium policies. This result is expected since, as in QBFs two quantifiers in the same block (therefore, with the same type) can be reversed without affecting the satisfiability. 
\item $\forall x \exists z \varphi$ has the equilibrium policy $\uQBFpol{x}{\eQBFpol{z}{\FAUX}{\lambda}}{\eQBFpol{z}{\VRAI}{\lambda}}$. In this case, the formula  $\forall z \exists x \varphi$ has the same equilibrium policies. However, as in QBFs, this is not generally true.
\item Finally, the reader can verify that the formulas $\exists x \forall z \varphi$, $\exists z \forall x \varphi$, $\forall z \forall x \varphi$ and $\forall x \forall z \varphi$ have no equilibrium policies.
\end{enumerate}
\end{example}
\begin{sloppypar}
Algorithm~\ref{alg:QEM} computes all equilibrium policies associated to an equilibrium model. 
We provide some comments on our algorithm.
The base cases, $\exists x\; \varphi$ and $\forall x\; \varphi$  consider only total $\GTROIS$ interpretations that satisfy $\varphi$ and cannot be discarded by (only) minimising the truth of $x$. 
Also, non-total interpretations that satisfy $\varphi$ will provide no equilibrium policy at all but they will be kept since they can be used to disregard other interpretations.  
For the inductive step, the operator \MCEXISTS\ (resp. \MCFORALL) applies the minimisation condition (Definition \ref{def:mcexists}.$(T_{\exists x})$ (resp. Definition \ref{def:mcforall}.$(T_{\forall x})$)) in order to discard equilibrium policies that are not minimal with respect to Definition~\ref{def:EM}. 
Also potential interpretations that can be used to discard total models are kept for further tests.
For sake of regularity, in $T_{\UNDEMI}$, QBF policies are built from non-crisp models but, of course, all those models can be put into the set of non-crisp models $HC_{\UNDEMI}$ and $T_{\UNDEMI}$ left empty.
For sake of simplicity, the set of non-crisp models is not managed optimally but the relation $\trianglelefteq$ may be used to keep only the minimal elements.
A \texttt{Prolog} prototype is available at \url{https://leria-info.univ-angers.fr/~igor.stephan/Research/EQUILIBRIUM_POLICIES/equilibrium_policies.html}.
\end{sloppypar}

\section{Conclusions and Future Work}\label{sec:conclusions}
In this paper, we considered Quantified Answer Set Programming (QASP) as defined in~\cite{ua8046,falaroscso21a}, where propositional quantification is applied over propositional variables. We demonstrated that the two referenced semantics are not equivalent.
We also addressed the problem of policy generation~\cite{cosfarlanlebmar06} under the semantics of~\cite{falaroscso21a}, introducing an algorithm to compute all minimal policies associated with a QBF. Our approach relies on quantified propositional \GTROIS{}~\cite{BaazCZ00} as a monotonic base and uses ideas from equilibrium logic for minimality. The algorithm has been implemented in a \texttt{Prolog} prototype, allowing practical evaluation on real examples.
For future work, we plan to explore the following directions:
\begin{enumerate}[itemsep=0pt]
    \item \textbf{Stéphan's semantics:} The interpretation of propositional quantification as filters over stable models~\cite{falaroscso21a} aligns with ASP, but restricts players from explicitly choosing truth assignments. Stéphan’s approach allows such moves; we aim to further explore its potential in Knowledge Representation.
    \item \textbf{Algorithmic improvements:} Our current algorithm exhaustively explores all cases, including potentially redundant ones. We intend to first explore the complexity of the algorithm, which we did not consider in this paper, as well as to study the formal properties of our framework to simplify and optimize the procedure.
\item \textbf{Syntactic subfragments of QBF:} Given that QASP satisfiability is PSPACE-complete~\cite{falaroscso21a}, we are interested in identifying syntactic restrictions that may yield lower complexity, following insights from~\cite{StephanM09,abs-2411-10093}.
    \item \textbf{Application to two-player games:} QASP has been applied to model two-player games~\cite{Saffidine}. We plan to apply our policy-extraction algorithm in this context, moving beyond satisfiability checking to identify concrete strategies.
\end{enumerate}


\end{document}